\documentclass[reqno]{amsart}

\usepackage{amssymb}
\usepackage{enumerate}

\usepackage[usenames,dvipsnames]{color}
\usepackage[pdftex,bookmarks,colorlinks,breaklinks]{hyperref}  
\definecolor{dullmagenta}{rgb}{0.4,0,0.4}   
\definecolor{darkblue}{rgb}{0,0,0.4}
\hypersetup{linkcolor=red,citecolor=blue,filecolor=dullmagenta,urlcolor=darkblue} 
\hypersetup{linkcolor=black,citecolor=black,filecolor=black,urlcolor=black} 

\newcommand{\eq}[1]{\eqref{#1}}




\newtheorem{theorem}{Theorem}[section]

\newtheorem{lemma}[theorem]{Lemma}
\newtheorem{corollary}[theorem]{Corollary}

\numberwithin{equation}{section}

\DeclareMathOperator{\supp}{supp}
\DeclareMathOperator{\tr}{tr}
\DeclareMathOperator{\Ran}{Ran}
\DeclareMathOperator{\dist}{dist}

\newcommand{\pr}{\prime}

\newcommand\R{\mathbb R}
\newcommand\N{\mathbb N}

\newcommand\Z{\mathbb Z}

\renewcommand\P{\mathbb P}

\renewcommand\H{\mathcal{H}}
\renewcommand\L{\mathrm{L}}
 
\newcommand\D{\mathcal{D}}

\newcommand{\cJ}{\mathcal{J}}
\newcommand{\cD}{\mathcal{D}}

\newcommand{\cG}{\mathcal{G}}
\newcommand{\cE}{\mathcal{E}}

\newcommand{\cF}{\mathcal{F}}

\newcommand{\cV}{\mathcal{V}}

\newcommand{\bPhi}{\boldsymbol{\Phi}}

\newcommand\di{\mathrm{d}}
\newcommand\e{\mathrm{e}}

\newcommand\eps{\varepsilon}
\newcommand{\vrho}{\varrho}
\newcommand\La{\Lambda}
\newcommand{\vphi}{\varphi}

\newcommand\Chi{\raisebox{.2ex}{$\chi$}}

\newcommand{\abs}[1]{\left\lvert #1 \right\rvert}
\newcommand{\norm}[1]{\left\lVert #1 \right\rVert}

\newcommand{\set}[1]{\left\{ #1 \right\}}
\newcommand{\pa}[1]{\left( #1 \right)}

\newcommand{\bra}[1]{\left [ #1 \right ]}

\newcommand{\up}[1]{^{(#1)}}

\newcommand\beq{\begin{equation}}
\newcommand\eeq{\end{equation}}

\newcommand{\qtx}[1]{\quad\text{#1}\quad}

\begin{document}

\title
{Bounds on the  density of states for Schr\" odinger operators}

\author{Jean Bourgain}
\address{Institute for Advanced Study, Princeton, NJ 08540, USA}
 \email{bourgain@ias.edu}

\author{Abel Klein}
\address{University of California, Irvine,
Department of Mathematics,
Irvine, CA 92697-3875,  USA}
 \email{aklein@uci.edu}

\thanks{A.K. was  supported in part by the NSF under grant DMS-1001509.}


\begin{abstract}
We establish bounds on the density of states measure for Schr\" odinger operators. These are deterministic results that do not require the existence of the density of states measure, or, equivalently, of the  integrated density of states. The results are stated in terms of a ``density of states outer-measure" that always exists, and provides an upper bound for the density of states measure when it exists.  We prove log-H\"older continuity for this density of states outer-measure in one, two, and three dimensions for  Schr\" odinger operators, and in any dimension for discrete Schr\" odinger operators.   
\end{abstract}

\maketitle

\tableofcontents

\section{Introduction}

We study the density of states  of the Schr\" odinger operator
\begin{equation} \label{schr}
H = - \Delta + V \qtx{on} \mathrm{L}^2(\mathbb{R}^d), 
\end{equation}
where $\Delta$ is the  Laplacian operator and $V$ is a bounded potential.  The  density of states measure of an interval  gives the ``number of states per unit volume" with energy in the interval; its cumulative distribution function is the integrated density of states.  Finite volume  density of states measures, i.e.,  density of states measures for  restrictions of the Schr\" odinger operator to finite volumes, are always well defined.
The  density of states measure is  given by appropriate limits of  finite volume density of states measures, when such limits exist.  These limits are known to exist for  Schr\"odinger operators where the potential $V$ is in some sense uniform in space (e.g.,   periodic potentials,  ergodic  Schr\"odinger operators), but not for general Schr\"odinger operators.   The  density of states measure and the corresponding   integrated density of states cannot be defined for general Schr\"odinger operators.  For this reason we
introduce the density of states outer-measure, which always exists, and provides an upper bound for  the  density of states measure, when it exists.
We prove upper bounds on the density of states outer-measure of small intervals, establishing  log-H\" older  continuity  in one, two, and three dimensions for  Schr\" odinger operators, and in any dimension for discrete Schr\" odinger operators. 

 We let   
  \beq
\Lambda_{L}(x):= x + \left]-\tfrac L 2,\tfrac L 2\right[^d =\set{y \in \R^d; \; \abs{y-x}_\infty< \tfrac L 2}
\eeq
denote   the  (open)
box of side $L$ centered at $x \in \R^{d}$.  By a box $\Lambda_L$ we will mean a box $\Lambda_{L}(x)$ for some $x \in \R^d$. 
We write     $\norm{\psi}=\norm{\psi}_{2 }$ for $\psi \in  \mathrm{L}^2(\mathbb{R}^d)$ or $\psi \in  \mathrm{L}^2(\La)$.  We set $V_{\infty}=\norm{V}_{\infty}$, the norm of the bounded potential $V$. By $\Chi_{B}$ we denote the characteristic function of the set $B$.  Constants such as $C_{a,b,\ldots}$  will always  be finite and depending only on the parameters or quantities $a,b,\ldots$; they will be independent of other  parameters or quantities in the equation.  Note that $C_{a,b,\ldots}$ may stand for different constants in different sides of the same inequality.

Given a finite box $\Lambda\subset \R^d$, we  let $H_{\Lambda}^\sharp$ and  $\Delta_{\Lambda}^\sharp$ be the restriction of $H$ and $\Delta$ to $\mathrm{L}^2(\Lambda)$ with $\sharp$  boundary condition, where $\sharp = $ $D$ (Dirichlet), $N$ (Neumann), or  $P$ (periodic). We  define finite volume  density of states measures  $\eta_{\La,\sharp}$ on Borel subsets B  of ${\R}^d$  by 
\begin{align}\label{defetasharp}
\eta_{\La,\sharp} (B) &:=\tfrac 1 {\abs{\Lambda}}\tr \set{\Chi_{B}(H^\sharp_\Lambda)}\qtx{for}   \sharp = D, N, P,\\
\eta_{\La,\infty} (B)& :=\tfrac 1 {\abs{\Lambda}} \tr \set{\Chi_{B}(H) \Chi_\Lambda}.
\notag
\end{align}
Note that for for all Borel subsets  $B \subset ]-\infty, E] $ we have
\beq\label{bddsm}
\eta_{\La,\sharp} (B) \le C_{d,V_\infty,E} < \infty   \qtx{for} \sharp =\infty, D,N,P.
\eeq
Moreover, given $f \in C_c(\R)$ and $\delta >0$, there exists $L(d, V_\infty,\delta,f)$ such that for all $L \ge L(d, V_\infty,\delta,f)$ and  $x_0 \in \R^d$ we have
\beq\label{relationf}
\abs{\eta_{\Lambda_L(x_0),\sharp_1}(f) -\eta_{\Lambda_L(x_0),\sharp_2}(f)} \le \delta \qtx{for} \sharp_1,\sharp_2=\infty, D,N,P.
\eeq
(This  can be extracted from \cite[see Theorem~3.6, Theorem~6.2, and their proofs]{DIM}.)  
The finite volume integrated density of states are the corresponding cumulative distribution functions:
\beq
N_{\La,\sharp}(E):= \eta_{\La,\sharp} ( ]-\infty, E] ).
\eeq

For periodic and ergodic Schr\"odinger operators,   density of states measures $\eta_{\sharp}$  can be defined as weak limits of the finite volume density of states measures $\eta_{\La,\sharp}$
for sequences of boxes $\La \to \R^d$ in an appropriate sense. In this case,    the integrated density of states $N_{\sharp}(E):= \eta_{\sharp} ( ]-\infty, E] )$ satisfies $N_{\sharp}(E)=\lim_{\La \to \R^d} N_{\La,\sharp}(E)$ except for a countable set of energies.  Moreover, they all coincide, so we define the  density of states measure $\eta$ and the integrated density of states $N(E)$ by  $\eta(B):=\eta_{\sharp}(B)$ and   $N(E):=N_{\sharp}(E)$ for $\sharp =  \infty, D, N, P$.
(See  \cite{KiM,PF,CL,DIM,Nak}.)

Since infinite volume density of states measures and  integrated density of states cannot be defined for general Schr\"odinger operators,   we
 define density of states outer-measures    on Borel subsets B  of ${\R}^d$    by
 \begin{equation}\label{defetsouter}\begin{split}
 \eta^*_{L,\sharp} (B)& := \sup_{x \in \R^d} \eta_{\La_L(x),\sharp} (B)\\ 
    \eta^*_\sharp (B)& := \limsup_{L\to \infty}  \eta^*_{L,\sharp} (B) \end{split}\; , \qquad \sharp =  \infty, D, N, P .
 \end{equation}
These are always finite on bounded sets in view of \eq{bddsm}. (They are indeed outer-measures, so we call them outer-measures for lack of a better name.)  Moreover,  it follows from \eq{relationf} that
for  all  $E_1,E_2 \in \R$, $E_1 \le E_2$, and $\delta >0$  we have
\beq\label{eta*upper}
\eta^*_{\sharp_1} ([E_1,E_2]) \le \eta^*_{\sharp_2} ([E_1-\delta ,E_2+ \delta]) \qtx{for all} \sharp_1,\sharp_2=\infty, D,N,P.
\eeq

We will say that we have  continuity of  
the  density of states outer-measure $\eta^*_{\sharp}$ if 
\beq \label{eta*upper1}
\lim_{\eps \to 0}\eta^*_{\sharp} ([E-\eps ,E+\eps]) =0 \qtx{for all} E \in \R.
\eeq 
In view of \eq{eta*upper},
 continuity of  
 $\eta^*_{\sharp}$ for some value of $\sharp$ implies  continuity of  
 $\eta^*_{\sharp}$ for all values of $\sharp$, and we have 
\beq   \label{eta*upper2}
 \eta^*_{\infty} ([E_1,E_2])=\eta^*_{D} ([E_1,E_2])=\eta^*_{N} ([E_1,E_2])=\eta^*_{P} ([E_1,E_2])
\eeq
for all  $E_1,E_2 \in \R$, $E_1 \le E_2$.   In this case we set  \beq \label{eta*upper3}
\eta^* ([E_1,E_2]):= \eta^*_{\sharp} ([E_1,E_2]) \qtx{for} \sharp =  \infty, D, N, P,
\eeq
and  say that we have continuity of   the  density of states outer-measure.

We are ready to  state our main result.  Note that if the density of states measure $\eta_{\sharp}$ exists, we always  have
\beq\label{eta<eta*}
\eta_{\sharp}(B) \le  \eta^*_\sharp (B) \qtx{for all Borel sets} B\subset \R^d,
\eeq
and   hence   continuity of  
the  density of states outer-measure implies continuity of the integrated density of states

 \begin{theorem} \label{thmmain}
Let $H$ be a Schr\" odinger operator as in \eq{schr}, where $d=1,2,3$.   Then we have continuity of  
the  density of states outer-measure.  Moreover, given $E_0 \in \R$, for all $E \le E_0 $ and $ \eps \le \tfrac 1 2$ we have 
\beq\label{dosomlog}
\eta^* \pa{[E,E + \eps]} \le \frac {C_{d,V_\infty, E_0 }} {\pa{\log \tfrac 1 \eps}^{\kappa_d} }, \qtx{where}  \kappa_1=1,\;  \kappa_2= \tfrac 1 4, \; \kappa_3= \tfrac 1 8.
\eeq 
\end{theorem}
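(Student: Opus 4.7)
The plan is to reduce the continuity statement to a uniform Wegner-type upper bound on the spectral counting of the finite-volume Dirichlet operator $H^D_{\Lambda_L}$, for a suitable $L=L(\varepsilon)$, and to derive this Wegner-type bound from the Bourgain--Kenig quantitative unique continuation principle (UCP). By \eqref{eta*upper} and the definition of $\eta^*_\sharp$, the statement reduces to bounding $\frac{1}{L^d}\,\tr\,\chi_{[E,E+\varepsilon]}(H^D_{\Lambda_L(x_0)})$ uniformly in $x_0\in\R^d$ for a suitable $L=L(\varepsilon)\to\infty$, and then passing to the infinite-volume limit via \eqref{relationf}. The Bourgain--Kenig UCP (supplemented by Sobolev elliptic regularity to produce an a priori $L^\infty$ bound on eigenfunctions, which is where the restriction $d\le 3$ enters) then supplies, for every normalized eigenfunction $\psi_n$ of $H^D_{\Lambda_L}$ with eigenvalue $\le E_0$, a lower bound
\[
\|\chi_{\Lambda_1(y)}\psi_n\|^2\ \ge\ \alpha_L := \exp\bigl(-C_{V_\infty,E_0}\,L^{4/3}\bigr)
\]
valid for every unit cube $\Lambda_1(y)\subset\Lambda_L$ at distance $\ge 1$ from $\partial\Lambda_L$.

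Given this pointwise lower bound, I would introduce the auxiliary multi-parameter perturbation $H^D_{\Lambda_L}+\sum_{j\in J}t_j\chi_{\Lambda_1(j)}$ over a unit-cube tiling $\{\Lambda_1(j)\}_{j\in J}$ of $\Lambda_L$. The Feynman--Hellmann identity gives $\partial_{t_j}E_n(t)=\langle\psi_n(t),\chi_{\Lambda_1(j)}\psi_n(t)\rangle\ge\alpha_L$ uniformly in $j$ and $t$. Integrating over $t_j\in[0,1]$ and invoking monotonicity of the eigenvalue-counting function under the positive perturbation, together with a Krein spectral-shift comparison, should yield a deterministic Wegner-type estimate of the form
\[
\#\{E_n(H^D_{\Lambda_L})\in[E,E+\varepsilon]\}\ \le\ C\,L^{d_\ast}\,\varepsilon^{p_d}\,\alpha_L^{-q_d},
\]
for suitable positive, dimension-dependent exponents. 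Choosing $L\sim(\log 1/\varepsilon)^{\sigma_d}$ so that $\alpha_L^{-1}$ becomes only a small negative power of $\varepsilon$ balances the factors and yields \eqref{dosomlog} with the stated $\kappa_d$. The one-dimensional case $\kappa_1=1$ is naturally of a different nature; I would treat it separately using classical one-dimensional tools (Pr\"ufer variables, transfer matrices, Thouless-type arguments) to obtain the optimal log-H\"older exponent $1$ without reliance on UCP.

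The main obstacle is the step going from the pointwise UCP lower bound to an eigenvalue-counting bound for the \emph{unperturbed} operator $H^D_{\Lambda_L}$ (i.e.\ at $t\equiv 0$). Direct spectral averaging controls only $\int \#\{E_n(t)\in[E,E+\varepsilon]\}\,dt$, which does not automatically bound the count at $t=0$. The transfer requires carefully exploiting monotonicity of the eigenvalue-counting function under the positive perturbation $\sum_j t_j\chi_{\Lambda_1(j)}$ and paying a spectral-shift loss; the growth of this loss with the number of perturbation sites (hence with $L^d$) is precisely what forces the deterioration of $\kappa_d$ from $d=2$ to $d=3$, and what ultimately confines the theorem to low dimensions.
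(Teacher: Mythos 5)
Your reduction to a uniform finite-volume bound on $\tfrac1{L^d}\tr\,\chi_{[E,E+\eps]}(H^D_{\Lambda_L})$ with $L=L(\eps)$, and your identification of the Bourgain--Kenig UCP as the engine for $d=2,3$, both match the paper. But the mechanism you propose for converting the UCP into an eigenvalue count --- Feynman--Hellmann plus averaging over auxiliary couplings $t_j$ plus monotonicity and a ``Krein spectral-shift comparison'' --- is the random Wegner machinery, and the step you yourself flag as ``the main obstacle'' is not a technical detail but the whole difficulty; it is left unfilled, and I do not believe it can be filled along these lines. Concretely: (i) in the continuum the perturbation $T\chi_{\Lambda_1(j)}$ has infinite rank and its spectral shift function is not pointwise bounded, so there is no general bound on the number of eigenvalues crossing the level $E+\eps$ as $t_j$ runs over $[0,\eps/\alpha_L]$; (ii) if such a per-cube $O(1)$ bound were available, perturbing a \emph{single} cube would already give $\tr\,\chi_{[E,E+\eps]}(H^D_{\Lambda_L})\le C$ uniformly in $L$, hence a vanishing density of states --- false for periodic potentials --- so any repair must lose a factor you have not quantified, and the exponents $p_d,q_d,d_*$, hence $\kappa_d$, are never actually derived. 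A further inaccuracy: the restriction $d\le3$ does not come from Sobolev regularity needed for an $L^\infty$ bound on eigenfunctions (the heat-semigroup bound $\|e^{-tH_\Lambda}\|_{\L^2\to\L^\infty}<\infty$ works in every dimension); it comes from the exponent $Q^{4/3}$ in \eqref{UCPbound}, as the paper explains.

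What the paper does instead avoids eigenvalue perturbation entirely and works with the subspace $\Ran P$, $P=\chi_{[E,E+\eps]}(H_\Lambda)$: if $\rho=\tfrac1{L^d}\tr P$ were large, one can (a) impose $\gamma_dN^{d-1}$ linear conditions per cube of side $R$ --- via the quantitative Bers-type decomposition of Lemma~\ref{lemBers}/Theorem~\ref{lem0psi0} --- to extract a subspace of $\Ran P$ of dimension still $\ge\tfrac12\rho L^d$ whose elements vanish to order $N+1\sim(\rho R^d)^{1/(d-1)}$ (up to an $\eps$ error) at every point of a grid, and (b) by a trace-class argument locate a unit cube carrying mass $\gtrsim\rho$ of some normalized $\psi_0$ in that subspace. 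Corollary~\ref{corQUCPD} makes (a) and (b) incompatible unless $\rho$ is small; the requirement $N>MR^{4/3}$, i.e.\ $R^{d/(d-1)}\gg R^{4/3}$, is exactly where $d<4$ enters, and balancing yields $\kappa_d=(4-d)/8$. Your $d=1$ plan has an analogous gap: Craig--Simon/Thouless-type arguments presuppose ergodicity and an existing IDS, neither of which is available here. The paper's $d=1$ proof instead combines a Gronwall (transfer-matrix) bound on a subspace of $\Ran P$ cut out by vanishing conditions with the elementary but crucial Lemma~\ref{lemL2Linfty}, which produces $\psi$ in a subspace $\cV$ with $\|\psi\|_\infty\ge\sqrt{\dim\cV}\,\|\psi\|_2$; that lemma is the device converting $\tr P$ into a contradiction, and nothing in your sketch plays its role.
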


We also prove a similar result for discrete Schr\" odinger operators, i.e., for 
\begin{equation} \label{schrdiscrete}
H = - \Delta + V \qtx{on} \ell^2(\mathbb{Z}^d), 
\end{equation}
where  $V$ is a bounded potential and  $\Delta$ is the   centered discrete Laplacian,
\beq\label{discLap}
\Delta \psi (x) =\sum_{y \in \Z^d; \; \abs{x-y}=1} \psi(y) \qtx{for} x \in \Z^d.
\eeq
(Our results are still  valid if we take $\Delta$ to be any translation invariant finite range self-adjoint operator on $\ell^2(\mathbb{Z}^d)$.) In $\Z^d$ we define the box of side $L$ centered at $x \in \Z^{d}$ by
\beq
\La= \La_L(x)= \set{y \in \Z^d; \; \abs{y-x}_\infty\le  \tfrac L 2},
\eeq
  and define finite volume operators $H_{\Lambda}^\sharp$ and  $\Delta_{\Lambda}^\sharp$ as the restriction of $H$ and $\Delta$ to $\ell^2(\Lambda)$ with $\sharp$  boundary condition, where $\sharp = $ $D$ (Dirichlet, i.e., simple boundary condition) or  $P$ (periodic).   We  define finite volume  density of states measures  $\eta_{\La,\sharp}$  as in \eq{defetasharp} and  density of states outer-measures  $\eta^*_{L,\sharp},  \eta^*_\sharp $ as in \eq{defetsouter}  for  $\sharp =  \infty, D, P$.  In the discrete case it is easy to see that we also have \eq{eta*upper}, and hence  continuity of  
 $\eta^*_{\sharp}$ for some value of $\sharp$ implies    \eq{eta*upper2}, in which case we  define $\eta^*$ as in \eq{eta*upper3}.

\begin{theorem} \label{thmmaindisc}
Let $H$ be a discrete Schr\" odinger operator as in \eq{schrdiscrete}.  Then for all $d=1,2,\dots$ we have continuity of  
the  density of states outer-measure, and  for all $E \in \R $ and $ \eps \le \tfrac 1 2$ we have 
\beq\label{dosomlogdisc}
\eta^* \pa{[E,E + \eps]} \le \frac {C_{d,V_\infty}} {\log \tfrac 1 \eps}.
\eeq 
\end{theorem}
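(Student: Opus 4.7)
My plan is to reduce Theorem \ref{thmmaindisc} to a deterministic eigenvalue-counting bound on finite boxes: for every box $\Lambda=\Lambda_L(x)\subset\Z^d$ with $L$ large, every $E\in\R$, and every $\varepsilon\in(0,\tfrac12]$,
\begin{equation*}
\tfrac{1}{|\Lambda|}\,\#\bigl\{\lambda\in\sigma(H^P_\Lambda)\cap[E,E+\varepsilon]\bigr\}\ \leq\ \tfrac{C_{d,V_\infty}}{\log(1/\varepsilon)}.
\end{equation*}
By \eqref{eta*upper} it suffices to treat $\sharp=P$; also, since $\|H^P_\Lambda\|\leq 2d+V_\infty$, only $E$ in a bounded interval require attention, after which $\eta^*$ is obtained by passing to the $\limsup$ in $L$.

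The natural tool is the subharmonic function
\begin{equation*}
\xi_\Lambda(z)\ :=\ \tfrac{1}{|\Lambda|}\log|\det(H^P_\Lambda-z)|,\qquad z\in\C,
\end{equation*}
whose Riesz measure is the normalized eigenvalue counting measure of $H^P_\Lambda$. The operator-norm bound gives the uniform pointwise upper bound $\xi_\Lambda(z)\leq\log(|z|+2d+V_\infty)$. Jensen's formula on a disk $D(z_0,r)$ then reads
\begin{equation*}
\xi_\Lambda(z_0)\ +\ \tfrac{1}{|\Lambda|}\sum_{\lambda\in\sigma(H^P_\Lambda)\cap D(z_0,r)}\log\tfrac{r}{|\lambda-z_0|}\ \leq\ C_{d,V_\infty,r}.
\end{equation*}
With $z_0=E'+i\varepsilon$ for some $E'$ within distance $O(\varepsilon)$ of $E$ and $r$ of order one, every $\lambda\in[E,E+\varepsilon]$ sits at distance $O(\varepsilon)$ from $z_0$ and so contributes at least $\tfrac12\log(1/\varepsilon)$ to the sum. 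Hence the desired eigenvalue-count bound would follow from a matching uniform lower bound $\xi_\Lambda(z_0)\geq -C_{d,V_\infty}$.

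Producing this lower bound without ergodicity is the heart of the matter. The Craig-Simon strategy for ergodic operators invokes convergence of $\xi_\Lambda$ to the (finite) Lyapunov exponent; in our deterministic setting I would instead start from the averaged estimate $\int\xi_\Lambda(E')\,dE'\in[-C_{d,V_\infty},C_{d,V_\infty}]$ on any bounded interval, which holds because $\int\log|\lambda-E'|\,dE'$ is absolutely bounded uniformly in $\lambda$ over the bounded spectrum. A Chebyshev-type argument yields a set of positive Lebesgue measure of \emph{good} energies $E'$ near $E$ on which $\xi_\Lambda(E')$ exceeds a fixed negative constant; the Poisson integral representation for subharmonic functions on the upper half-plane then propagates this bound upward to $z_0=E'+i\varepsilon$, providing the input needed for the Jensen estimate above.

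The main obstacle is the tension between the two requirements on $E'$: it must lie within distance $O(\varepsilon)$ of $E$ so that $\lambda\in[E,E+\varepsilon]$ actually contributes terms of size $\log(1/\varepsilon)$ in Jensen, yet far enough from eigenvalue clusters that $\xi_\Lambda(E')$ is not overwhelmingly negative. I expect this to be resolved by a dyadic multiscale iteration: at each dyadic scale $2^{-k}$ around $E$, either one finds a good $E'$ at that scale, or enough eigenvalue mass is already captured to terminate the iteration. The comparative simplicity of the discrete setting --- bounded operator norm, no need for quantitative unique continuation --- is presumably what permits the log-H\"older exponent $\kappa=1$ in every dimension, matching the $d=1$ continuum case.
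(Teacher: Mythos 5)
There is a genuine gap, and it sits exactly where you locate "the heart of the matter": the uniform lower bound $\xi_\Lambda(z_0)\ge -C_{d,V_\infty}$ at a point $z_0=E'+i\eps$ with $|E'-E|=O(\eps)$. Observe that this lower bound is essentially \emph{equivalent} to the conclusion you are trying to prove: writing $\rho=\mu_\Lambda([E,E+\eps])$ for the normalized eigenvalue counting measure, every $\lambda\in[E,E+\eps]$ contributes $\log|z_0-\lambda|\le\log(C\eps)$ to $\xi_\Lambda(z_0)=\int\log|z_0-\lambda|\,d\mu_\Lambda(\lambda)$, so $\xi_\Lambda(z_0)\le C-\rho\log(1/\eps)$; hence $\xi_\Lambda(z_0)\ge -C'$ already implies $\rho\le (C+C')/\log(1/\eps)$ with no need for Jensen, and conversely if $\rho\log(1/\eps)$ is large then \emph{no} admissible $z_0$ is good. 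Your proposed route to the lower bound cannot break this circularity. Averaging $\xi_\Lambda$ over an interval $I$ of length $O(\eps)$ around $E$ gives $\frac{1}{|I|}\int_I\xi_\Lambda(E')\,dE'\ge -C-C\rho'\log(1/\eps)$, where $\rho'$ is the eigenvalue mass in a neighborhood of $I$ (each nearby eigenvalue contributes about $\log\eps$ to the average of $\log|E'-\lambda|$), and $\rho'\ge\rho$. Chebyshev then yields only $\xi_\Lambda(E')\ge -C-C\rho\log(1/\eps)$ for the good $E'$, and feeding this into Jensen produces the vacuous inequality $\rho\log(1/\eps)\le C+C\rho\log(1/\eps)$. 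A dyadic iteration over scales $2^{-j}$ runs into the same wall: the recursion it generates, roughly $j\rho_j\le C+C\sum_{i<j}\rho_i$, does not force $\rho_k\lesssim 1/k$. In Craig--Simon the circle is broken by an external input supplied by ergodicity (the Thouless formula and positivity of the Lyapunov exponent in $d=1$, and the existence and identification of the limiting subharmonic function $\gamma(z)$ in general); you would need a deterministic substitute for that input, and none is provided. (A minor additional point: the Poisson representation does not propagate a lower bound at a single boundary point upward for a subharmonic function; what saves that step is simply $|E'+i\eps-\lambda|\ge|E'-\lambda|$, so $\xi_\Lambda(E'+i\eps)\ge\xi_\Lambda(E')$.)

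For contrast, the paper avoids determinants entirely and works with approximate eigenfunctions. Setting $P=\Chi_{[E,E+\eps]}(H_{\La_L})$ and $\rho=\tfrac{1}{|\La_L|}\tr P$, one covers $\La_L$ by boxes of side $R\sim 1/\rho$ and passes to the subspace $\cF\subset\Ran P$ of functions vanishing on two consecutive shells of each subbox; this costs only $O(|\La_L|/R)$ dimensions, so $\dim\cF\ge\tfrac12\rho|\La_L|$. The finite-difference equation then propagates the vanishing inward, giving the global bound $\norm{\psi}_\infty\le\eps\,\tfrac{R}{2}A^{R/2}$ on $\cF$, while a duality lemma (Lemma~\ref{lemL2Linfty}) produces some $\psi_0\in\cF$ with $\norm{\psi_0}_\infty\ge\sqrt{\rho/2}\,\norm{\psi_0}$. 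Comparing the two bounds with $R\sim 1/\rho$ yields $\rho\le C_{d,V_\infty}/\log(1/\eps)$. If you want to salvage the determinant approach, you would need to prove a deterministic, quantitative lower bound on $\tfrac{1}{|\La|}\log|\det(H_\La^P-z)|$ near the real axis that does not already presuppose control of the local eigenvalue count; I am not aware of such a bound, and the paper's choice of a different method suggests the authors were not either.
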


We are not aware of previous results in the generality  of Theorems~\ref{thmmain} and \ref{thmmaindisc}.  Published results appear to be restricted to cases where we have existence of the integrated density of states. For periodic potentials, continuity of the the integrated density of states is equivalent to the nonexistence of eigenvalues, a nontrivial result proved by  Thomas \cite{Th}.  For ergodic Schr\" odinger operators, continuity of the the integrated density of states is equivalent to the nonexistence of energies that are eigenvalues of infinite multiplicity with probability one (see \cite[Lemma~V.2.1]{CL}). Although  Schr\" odinger operators can have eigenvalues of infinite multiplicity (see \cite{TE}), it is hard to imagine how a fixed energy  can be  an eigenvalue of infinite multiplicity for almost all realizations of an  ergodic Schr\" odinger operator.

Craig and Simon proved log-H\" older continuity (with exponent $1$) of the integrated density of states for  one-dimensional ergodic Schr\" odinger operators \cite{CS1} and for  ergodic discrete  Schr\" odinger operators in any dimension \cite{CS2}.  Delyon and Souillard \cite{DS} provided a simple proof of continuity of the integrated density of states in the discrete case.
 But  continuity of the the integrated density of states for multi-dimensional (continuous) ergodic Schr\" odinger operators, albeit expected, has been hard to prove in full generality.  It is  Problem 14 in   \cite{Siprob}, where it was called (in 2000) a 15 year old open problem.

  For random  Schr\" odinger operators continuity of the integrated density of states follows from a suitable Wegner estimate. The most general result is due to Combes, Hislop and Klopp \cite{CHK} that proved that for the Anderson model, both continuous and discrete, we always have continuity  of the integrated density of states  if the single-site probability distribution has no atoms. (They show that the  integrated density of states has as much regularity as the concentration function of the single-site probability distribution.)
Germinet and Klein \cite{GKloc} proved log-H\" older continuity  of the integrated density of states for the continuous Anderson model with arbitrary single-site probability distribution (e.g., Bernouilli) in the region of localization. (More precisely, in the region of applicability of the multiscale analysis; the    log-H\" older continuity of the integrated density of states is derived  from the conclusions of the multiscale analysis.)

The cases $d=1$ and $d=2,3$ of Theorem~\ref{thmmain} have separate proofs, the proof for $d=1$ being similar to the proof of Theorem~\ref{thmmaindisc}.   Note that  it suffices to establish \eq{dosomlog} and  \eq{dosomlogdisc} with Dirichlet boundary condition ($\sharp =  D$), since we would then have \eq{eta*upper3}.
Thus in the following sections we assume Dirichlet boundary condition  and drop it from the notation.

Theorem~\ref{thmmaindisc} and the $d=1$  case of Theorem~\ref{thmmain} are proved in Section~\ref{secdiscd1}; they are immediate consequences of Theorems~\ref{thmdisc} and  \ref{thmd=1}, respectively. 

Section~\ref{secmultid} is devoted to multi-dimensional Schr\" odinger operators. We start by studying  the local behavior of  approximate solutions of  the stationary Schr\" odinger equation  in Subsection~\ref{subseclocalb}; see Theorem~\ref{lem0psi0}.  Solutions  of  the stationary Schr\" odinger equation admit a local decomposition into a homogeneous harmonic polynomial and a lower order term  \cite{HW,Be}; in  Lemma~\ref{lemBers} we establish  a quantitative  version of this decomposition with   explicit  estimates of the lower order term. This result is extended to approximate solutions in  Lemma~\ref{lem0psi}, implying Theorem~\ref{lem0psi0}.
We then state and prove Theorem~\ref{thmucp}, a   version of 
 Bourgain and Kenig's quantitative unique continuation principle  \cite[Lemma~3.10]{BK},  in  which we make   explicit the dependence on the  parameters relevant to this article. Finally, in 
 Subsection~\ref{subsecd23} we prove Theorem~\ref{thmmaind23}, which implies  the $d=2,3$  cases of Theorem~\ref{thmmain}.

The restriction to $d=1,2,3$ in  Theorem~\ref{thmmain} is  due to the present form of the quantitative unique continuation principle (Theorem~\ref{thmucp}), where there is   a  term $Q^{\frac 4 3}$ in the exponent on the left hand side of  \eq{UCPbound}.  If we had $Q^{\beta}$ in \eq{UCPbound},  we would be able to prove  Theorem~\ref{thmmaind23}, and hence Theorem~\ref{thmmain}, for dimensions
$d < \frac \beta {\beta -1}$.    Since $\beta =\frac 4 3$, we get $d< 4$.  It is reasonable to expect that something like  Theorem~\ref{thmucp} holds with $\beta = 1+$ (there are no counterexamples for real potentials), in which case Theorem~\ref{thmmain} would hold for all $d$, with $\kappa_d=  \frac{\beta- d(\beta -1)} {2\beta}={\frac 1 2 -}$ \,  for $d \ge 2$  \,  in \eq{dosomlog}.

\section{Discrete and  one-dimensional     Schr\" odinger operators}\label{secdiscd1}

To prove Theorem~\ref{thmmaindisc} and the $d=1$  case of Theorem~\ref{thmmain}, we will select a class of approximate eigenfunctions for which we  establish a global upper bound, and use   Lemma~\ref{lemL2Linfty} to pick an  approximate eigenfunction for which we have a lower bound for the global upper bound.  In more detail: 
Given an energy $E$, $0<\eps\le \frac 1 2$, and a box $\La$, we set 
$P=\Chi_{[E,E + \eps]}(H_{\La}) $ and consider the linear space  $\Ran P$. (Note that  $\psi \in \Ran P$  is an approximate eigenfunction for $H_{\La}$ in the sense that $\norm{\pa{H_{\La_L} -E}\psi}\le\eps \norm{\psi}$.)  We select a linear subspace of $\cF$ of $ \Ran P$ for which   the
 $\L^\infty$-norms  are  uniformly bounded  in terms of the $\L^2$-norms (a global upper bound).
 We then use Lemma~\ref{lemL2Linfty} to pick $\psi_0 \in \cF$ for which we have a lower bound for  $\norm{\psi_0}_\infty$. Comparing this lower bound with the global upper bound yields the bound on $\eta_\La \pa{[E,E + \eps]}$.

\subsection{A lower bound for the maximal $\L^\infty$ norm}

\begin{lemma} \label{lemL2Linfty}  Let $\cV$  be a finite dimensional linear subspace of $\, \L^\infty(\Omega,\P)$, where  $(\Omega,\P)$ is a probability space.  Then there exists $\psi \in \cV$ with  $\norm{\psi}_2=1$  such that
\beq
\norm{\psi}_\infty \ge  \sqrt{\dim \cV} .
\eeq
\end{lemma}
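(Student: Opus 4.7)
My plan is a reproducing-kernel (or Bessel-type) argument on $\cV$, viewed as a finite-dimensional Hilbert space via the $\L^2(\Omega,\P)$ inner product. Writing $n=\dim\cV$, I first note that because $\P$ is a probability measure we have $\cV\subset\L^\infty(\Omega,\P)\subset\L^2(\Omega,\P)$, so there exists an $\L^2$-orthonormal basis $\phi_1,\dots,\phi_n$ of $\cV$. I fix pointwise bounded representatives of each $\phi_i$ (for instance with $|\phi_i(x)|\le\|\phi_i\|_\infty$ for every $x$) and extend by linearity; this yields a well-defined pointwise evaluation $\psi(x)=\sum_{i=1}^n a_i\,\phi_i(x)$ for $\psi=\sum_{i=1}^n a_i\,\phi_i\in\cV$, so that $\|\psi\|_\infty$ and the value $\psi(x_0)$ at a single point $x_0$ are both unambiguously defined.

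Next I study the diagonal of the reproducing kernel
\[
K(x):=\sum_{i=1}^n |\phi_i(x)|^2.
\]
Orthonormality immediately gives $\int_\Omega K\,d\P=\sum_{i=1}^n\|\phi_i\|_2^2=n$. Since $\P$ is a probability measure, it is impossible to have $K(x)<n$ at every point of $\Omega$ (that would force $\int K\,d\P<n$). Hence I can pick $x_0\in\Omega$ with $K(x_0)\ge n$.

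With such an $x_0$ in hand, my candidate is
\[
\psi(y):=\frac{1}{\sqrt{K(x_0)}}\sum_{i=1}^n \overline{\phi_i(x_0)}\,\phi_i(y)\in\cV,
\]
which is well-defined because $K(x_0)\ge n\ge 1$. Using orthonormality of $\{\phi_i\}$ I get $\|\psi\|_2^2=K(x_0)^{-1}\sum_{i=1}^n|\phi_i(x_0)|^2=1$, and direct evaluation gives $\psi(x_0)=\sqrt{K(x_0)}\ge\sqrt{n}$, so that $\|\psi\|_\infty\ge|\psi(x_0)|\ge\sqrt{\dim\cV}$, which is the desired inequality. I do not see any real obstacle in this plan; the only bookkeeping point is the initial choice of pointwise representatives of the basis elements, which is what lets me talk about $K(x_0)$ and $\psi(x_0)$ as actual numbers rather than equivalence-class values.
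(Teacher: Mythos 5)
Your construction is the same reproducing--kernel argument as the paper's direct proof, and everything up to the last line is sound: $\int_\Omega K\,\di\P=n$, the existence of $x_0$ with $K(x_0)\ge n$, the identity $\norm{\psi}_2=1$, and the evaluation $\psi(x_0)=\sqrt{K(x_0)}$. The gap is the final inequality $\norm{\psi}_\infty\ge\abs{\psi(x_0)}$. The norm on $\L^\infty(\Omega,\P)$ is the \emph{essential} supremum, and the value of a chosen representative at a single point does not bound it from below; your selection criterion $K(x_0)\ge n$ does nothing to prevent $x_0$ from lying in a $\P$-null set on which the representatives take values irrelevant to every $\L^\infty$ norm. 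Concretely, take $\Omega=\set{a,1,\dots,m}$ with $\P(\set{a})=0$ and $\P(\set{j})=\tfrac1m$, let $\cV=\L^\infty(\Omega,\P)$ (so $n=m$) with orthonormal basis $\phi_j=\sqrt m\,\Chi_{\set{j}}$, and choose representatives $f_j=\sqrt m\,\Chi_{\set{j}}$ on $\set{1,\dots,m}$ extended by $f_j(a)=1$; these satisfy $\abs{f_j}\le\norm{\phi_j}_\infty=\sqrt m$ everywhere. Then $K(a)=m=n$, so $x_0=a$ is an admissible choice in your scheme, yet the resulting $\psi$ equals the constant function $1$ almost everywhere, so $\norm{\psi}_\infty=1<\sqrt m$ even though $\psi(a)=\sqrt m$.

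This is exactly what the first step of the paper's direct proof is designed to handle: it invokes the Gelfand--Neumark theorem to replace $(\Omega,\P)$ by a compact Hausdorff space on which $\L^\infty(\Omega,\P)$ becomes $C(\Omega)$, so that the $\L^\infty$ norm is the genuine supremum norm and every point evaluation is dominated by $\norm{\cdot}_\infty$. Note also that upgrading your existence statement to ``$\set{K\ge n}$ has positive measure'' (which is true) does not by itself repair the argument, because the exceptional null set on which $\abs{\psi_{x_0}}$ may exceed $\norm{\psi_{x_0}}_\infty$ depends on the chosen $x_0$. If you want to avoid Gelfand--Neumark, you need some other device tying point evaluations to essential suprema; the paper's alternative proof via $2$-summing norms sidesteps pointwise evaluation altogether.
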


This lemma is known to  follow immediately  from the theory of absolutely summing operators,  but can also be proved by  a direct argument.  We present both proofs for completeness.

\begin{proof}[Proof of Lemma~\ref{lemL2Linfty} using absolutely summing operators]

Let   $\cV_p$ denote the linear space $\cV$ viewed as subspace of $\L^p(\Omega,\P)$ and  let $I^{p,q}$ be the identity
map from $\cV_p$ to $\cV_q$, with $\pi_2(I^{p,q})$ being its   $2$-summing norm. (We refer to   \cite{DJA} for the definition and properties of the  $2$-summing norm.)   Then  $\pi_2(I^{2,2})= \sqrt{\dim \cV}$, since    it    is  the same as  the Hilbert-Schmidt norm of $I^{2,2}$.       Factor  $I^{2,2} = I^{2,\infty} I^{\infty,2}$, so  $\pi_2(I^{2,2})\le  \norm{I^{2,\infty}}\pi_2(I^{\infty,2})$ by the  the ideal property  \cite[item 2.4]{DJA}.  Since $\pi_2(I^{\infty,2})\le 1$  \cite[Example~2.9(d)]{DJA}, we have
   $\norm{I^{2,\infty}} \ge \sqrt{\dim \cV}$,  and the lemma follows. 
\end{proof}

\begin{proof}[Proof of Lemma~\ref{lemL2Linfty} (direct proof)] Using the  the Gelfand-Neumark Theorem (e.g.,  \cite[Section 73]{S}) we can assume, without loss of generality,  that $\Omega$ is  a compact Hausdorff space and $\L^\infty(\Omega,\P)=C(\Omega)$.  Thus $\cV$ is a finite dimensional linear subspace  of  $C(\Omega) \subset \L^2(\Omega,\P)$.  Let $N=\dim \cV$, and pick an orthonormal basis  $\set{\phi_j}_{j=1}^N$ for $\cV$.  In particular, 
\beq
\phi (x,y): = \sum_{j=1}^N \overline{\phi_j(x) }\phi_j(y) \in C(\Omega^2),
\eeq
and we  have
\begin{align}
N= \int_\Omega \phi (x,x) \, \P(\di x)=  \int_\Omega\set{ \frac {\phi (x,x)}{\sqrt{\phi (x,x)}} }^2 \P(\di x) \le   \int_\Omega \max_{y \in \Omega} \set{ \frac {\abs{\phi (x,y)}}{\sqrt{\phi (x,x)}} }^2 \P(\di x). 
\end{align}
Since $\P$ is a probability measure, there exists $x_0 \in \Omega$ such that
\beq
 \max_{y \in \Omega}  \frac {\abs{\phi (x_0,y)}}{\sqrt{\phi (x_0,x_0)}} \ge \sqrt{N} .
\eeq
Setting 
\beq
\psi= \frac {\phi (x_0, \cdot)}{\sqrt{\phi (x_0,x_0)}}= \frac {1}{\sqrt{\phi (x_0,x_0)}} \sum_{j=1}^N \overline{\phi_j(x_0) }\phi_j,
\eeq
we have $\psi \in \cV$,  $\norm{\psi}_2=1$, and $\norm{\psi}_\infty \ge  \sqrt{N} $.
\end{proof}

\subsection{Discrete Schr\" odinger operators}
 Theorem~\ref{thmmaindisc} is an immediate consequence of the following theorem.
 
 \begin{theorem}\label{thmdisc} Let $H$ be a discrete Schr\" odinger operator as in \eq{schrdiscrete}.    Then for all $0< \eps\le \frac 12 $ and boxes $\Lambda=\Lambda_L$ with $L \ge L_{d,V_\infty} \, { \log \tfrac 1 \eps}$   we have
 \beq\label{logHolderd1disc}
\eta_\La \pa{[E,E + \eps]} \le \frac {C_{d,V_\infty}} {\log \frac 1 \eps}\qtx{for all} E \in \R.
\eeq 
\end{theorem}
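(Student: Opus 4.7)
Following the two-step strategy spelled out just before the theorem, I would set $P:=\Chi_{[E,E+\eps]}(H_\La^D)$ and $N:=\tr P=\abs{\La}\,\eta_\La([E,E+\eps])$; each $\psi\in\Ran P$ is then an approximate eigenfunction, $\norm{(H_\La-E)\psi}\le\eps\norm{\psi}$. The aim is to show $N\le C_{d,V_\infty}\abs{\La}/\log\tfrac{1}{\eps}$.

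The plan is to select a linear subspace $\cF\subset\Ran P$ with $\mathrm{codim}_{\Ran P}\cF\le C_{d,V_\infty}\abs{\La}/\log\tfrac{1}{\eps}$ on which one has a uniform global upper bound $\norm{\psi}_\infty\le K\norm{\psi}_2$ for all $\psi\in\cF$, with $K^2\le C_{d,V_\infty}/\log\tfrac{1}{\eps}$. Granted this, I would regard $\La$ as a probability space with the uniform probability measure and apply Lemma~\ref{lemL2Linfty} to $\cF$, obtaining $\psi_0\in\cF$ with $\norm{\psi_0}_\infty\ge\sqrt{\dim\cF/\abs{\La}}\,\norm{\psi_0}_2$. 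Comparing with the global upper bound gives $\dim\cF\le K^2\abs{\La}$, whence $N=\dim\cF+\mathrm{codim}_{\Ran P}\cF\le C'_{d,V_\infty}\abs{\La}/\log\tfrac{1}{\eps}$, which is \eqref{logHolderd1disc} after dividing by $\abs{\La}$.

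The natural starting point for the construction of $\cF$ is the pointwise identity $\abs{\psi(x)}=\abs{\la\delta_x,P\psi\ra}\le\sqrt{P(x,x)}\,\norm{\psi}_2$, valid on $\Ran P$. The heavy-site set $S:=\set{x\in\La:P(x,x)>K^2}$ satisfies $\abs{S}\le N/K^2$ (from $\sum_xP(x,x)=N$ and Markov), so the candidate $\cF_0:=\set{\psi\in\Ran P:\psi|_S=0}$ automatically satisfies $\norm{\psi}_\infty\le K\norm{\psi}_2$. The \textbf{main obstacle} is upgrading this naive construction to the logarithmically sharp $K^2\sim 1/\log\tfrac{1}{\eps}$ with codimension of the required order: the heavy-site excision alone produces only $K^2=O(1)$ and hence only the trivial bound $N\lesssim\abs{\La}$. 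The required improvement must come from the discrete Schr\" odinger recurrence. Since $\norm{H_\La}\le 2d+V_\infty$, the equation $(H_\La-E)\psi=O(\eps)$ can be iterated over length scales of order $\log\tfrac{1}{\eps}$ without the $\eps$-error outgrowing the transfer-matrix factor, so I would exploit this by replacing $\delta_x$ with $q(H_\La)\delta_x$ for a polynomial $q$ of degree $n\sim\log\tfrac{1}{\eps}$ that spreads $\delta_x$ over a ball of radius $n$; $P(x,x)$ then becomes comparable to an average of $P$ over this ball, gaining a volume factor $\sim 1/n$. The hypothesis $L\ge L_{d,V_\infty}\log\tfrac{1}{\eps}$ is precisely what ensures that such a ball fits inside $\La$ for all but an $O(\abs{\La}/\log\tfrac{1}{\eps})$-codimension set of sites near $\partial\La$, which is absorbed into the codimension budget of $\cF$.
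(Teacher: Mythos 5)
Your high-level frame---extract a large subspace $\cF\subset\Ran P$ with a uniform bound $\norm{\psi}_\infty\le K\norm{\psi}_2$ and play it off against Lemma~\ref{lemL2Linfty}---is exactly the paper's, but the construction of $\cF$ is where the proof lives, and the one you propose does not close. The heavy-site/Markov excision is circular for small $K$: whichever test vectors $v_x$ you use ($\delta_x$, or $q(H_\La)\delta_x$ normalized by $q(E)$), you still have $\sum_x\norm{Pv_x}^2=\tr\pa{q(H_\La)^2P}=\sum_{\lambda_j\in[E,E+\eps]}q(\lambda_j)^2\approx q(E)^2\,\tr P$, because all eigenvalues counted by $P$ lie within $\eps$ of $E$ and any polynomial admissible for your purposes cannot separate them from $E$. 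Hence the exceptional set from Markov still has size $\sim N/K^2$, and the bookkeeping $N\le K^2\abs{\La}+N/K^2$ is vacuous once $K^2\sim1/\log\tfrac1\eps<1$. In particular the claim that $P(x,x)$ ``becomes comparable to an average of $P$ over a ball of radius $n$, gaining a factor $1/n$'' is unsubstantiated: $\la q(H_\La)\delta_x,Pq(H_\La)\delta_x\ra$ is a quadratic form against a spread-out vector, which can be of order $q(E)^2$ whenever that vector has a substantial component in $\Ran P$, and is not the diagonal average; and even the genuine ball-average of $P(y,y)$ has total sum over $x$ still equal to $\approx N$, so Markov gains nothing.

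The missing idea is that the sup bound on $\cF$ must carry a factor of $\eps$, and this comes from the Cauchy-data structure of the second-order difference equation---precisely what your sentence about ``iterating the equation over length scales $\log\tfrac1\eps$'' gestures at but never implements. The paper covers $\La_L$ by sub-boxes $\La_R(y)$ with $R\sim1/\rho$, where $\rho=\eta_\La([E,E+\eps])$, and defines $\cF$ by requiring $\psi$ to vanish on the \emph{two outermost shells} of each sub-box; this costs codimension $\lesssim\abs{\La}R^{-1}\le\tfrac12\rho\abs{\La}$. Vanishing on two consecutive shells is Cauchy data for the recurrence $(H_\La-E)\psi=O(\eps)$ in $\ell^\infty$, so the equation propagates inward and forces $\norm{\psi}_\infty\le\eps\pa{\tfrac R2-1}A^{R/2-2}$ with $A=2d-1+\norm{V-E}_\infty$---a bound proportional to $\eps$, not merely $O\pa{1/\sqrt{\log\tfrac1\eps}}$. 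Comparing with $\norm{\psi_0}_\infty\ge\sqrt{\rho/2}$ from Lemma~\ref{lemL2Linfty} gives $\sqrt{\rho}\lesssim\eps\,R\,A^{R/2}$ with $R\sim1/\rho$, i.e.\ $\rho\lesssim1/\log\tfrac1\eps$. Without an $\eps$-dependent upper bound for $\norm{\psi}_\infty$ on $\cF$, no choice of thresholds in your scheme can produce the logarithm.
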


\begin{proof} 
    Let $\La_L=\La_L ( x_0)$ with  $x_0\in \Z^d$, $E \in \R$, and  $\eps \in ]0,\frac 1 2]$.  We  set  $P=\Chi_{[E,E + \eps]}(H_{\La_L}) $, and note that  
     \beq \label{epsineqdisc}
\norm{\pa{H_{\La_L} -E}\psi}_\infty \le   \norm{\pa{H_{\La_L} -E}\psi}\le\eps \norm{\psi}\qtx{for all} \psi \in \Ran P,
\eeq
since we have $ \norm{\psi}_\infty \le  \norm{\psi}$ for  all $ \psi \in\ell^2(\La)$.

We assume 
\beq \label{assumpdisc}
\rho:=\eta_{\La_L} \pa{[E,E + \eps]}= \tfrac 1 { \abs{{\La_L}}} \tr P > 0 ,
\eeq
 since otherwise there is nothing to prove.

 We fix  $R \in 2\N$, $R <L$,  to be selected later, and pick   $\cG\subset {\La_L}$   such that 
 \beq\label{La=uniony}
{\La_L} = \bigcup_{y \in \cG}{\La}_R(y) \qtx{and} {\tfrac { \abs{\La_L}}{\abs{\La_R}}}\le  \# \cG \le  2^d {\tfrac { \abs{\La_L}}{\abs{\La_R}}}.
 \eeq
Note that  $(L-1)^d <  \abs{\La_L}= \pa{2 \left\lfloor \frac L 2\right\rfloor +1}^d\le (L+1)^d$,  and 
$\abs{\La_R}=(R+1)^d$.  We set 
\beq
\partial_2  {\La}_R(y)=  \set{x \in  {\La}_R(y); \; \abs{x-y}_\infty\in \set{\tfrac R 2, \tfrac R 2 -1} },
\eeq
and let
\beq
\partial_R  {\La_L}=\bigcup_{y \in \cG} \partial_2  {\La}_R(y).
\eeq
We have 
 \beq\label{partialR}
 \abs{\partial_2  {\La}_R(y)}\le c_d R^{d-1}, \qtx{so}  \abs{\partial_R  {\La_L} }\le 2^d c_d R^{d-1}{\tfrac { \abs{\La_L}}{\abs{\La_R}}}\le 2^d  c_d \tfrac { \abs{\La_L}} R .
  \eeq
 
 We take
\beq\label{defLL}
L > \tfrac {2^{d+1}  c_d} \rho  + 2, 
\eeq
 since otherwise there is nothing to prove for large $L$, and pick
 \beq\label{discR}
  R\in \left[  \tfrac { 2^{d+1}  c_d} \rho,\tfrac {2^{d+1}  c_d} \rho  + 2\right)\cap 2\N.  
 \eeq

 We now consider the vector space 
 \beq
 \cF=\set{\psi \in \Ran P; \; \psi(x)=0 \qtx{for all} x \in \partial_R  \La_L}.
 \eeq
Since $\cF$ is the vector subspace of $\Ran P$ defined by $ \abs{\partial_R  {\La_L} } $ linear conditions, it follows from \eq{assumpdisc}, \eq{partialR}, and \eq{discR}    that 
 \beq
 \dim \cF\ge  \rho  \abs{\La_L} -   \abs{\partial_R  {\La_L} }  \ge \tfrac 1 2 \rho  \abs{\La_L}\ge  1.
 \eeq
 
  Let $\psi \in \cF$ with $\norm{\psi}=1$.  If $y \in \cG$, it follows from \eq{schrdiscrete}, \eq{discLap}, and  \eq{epsineqdisc} that if  we know that
$\abs{\psi(x)} \le  C$ for all $x$ with   $\abs{x-y}_\infty=k+1, k+2$, then we must have 
$\abs{\psi(x)} \le  C A+\eps$ for  $\abs{x-y}_\infty=k$, where 
$A=2d -1 + \norm{V-E}_\infty$. 
Since $\psi(x)=0$ if $\abs{x-y}_\infty=\tfrac R 2, \tfrac R 2 -1$, we get
\beq
\abs{\psi(x)} \le  \eps\textstyle \sum_{j=0}^{\tfrac R 2 -2- \abs{x-y}_\infty}A^j 
\le \eps\pa{\tfrac R 2 -1} A^{\tfrac R 2 -2}\qtx{for all } x \in \Lambda_{R-4}(y), 
\eeq
so $\abs{\psi(x)} \le\eps \pa{\tfrac R 2 -1} A^{\tfrac R 2 -2}$ for all $x \in  {\La}_R(y)$.
We conclude, using   \eq{La=uniony}, that  
\beq\label{discsup}
\norm{\psi}_\infty \le   \eps \pa{\tfrac R 2 -1} A^{\tfrac R 2 -2} .
\eeq
  
  We now use Lemma~\ref{lemL2Linfty}, obtaining $\psi_0 \in \cF$, $\norm{\psi_0}=1$, such that
\beq\label{psi>disc}
\norm{\psi_0}_\infty \ge \sqrt{\frac {\dim \cF} {\abs{\La_L}}}\ge  \sqrt{ \tfrac 1 2\rho} .
\eeq  
(The volume $\abs{\La_L}$ appears because the measure in  Lemma~\ref{lemL2Linfty} is normalized.)
 Combining   \eq{discsup},  \eq{psi>disc}, and \eq{discR} we get
 \beq
 \sqrt{ \tfrac 1 2\rho} \le  \eps \pa{\tfrac R 2 -1} A^{\tfrac R 2 -2} \le  \eps \tfrac {2^{d}  c_d} \rho A^{\tfrac {2^{d}  c_d} \rho - 1}\le \eps \tfrac {2^{d}  c_d} \rho A^{\tfrac {2^{d}  c_d} \rho } ,
 \eeq  
which implies
\beq
\rho \le \frac  {C_{d, \norm{V-E}_\infty}}  { \log \frac 1 \eps},
\eeq  
which is valid when   \eq{discR} holds, i.e.,      $L \ge {C^\pr_{d, \norm{V-E}_\infty}}  { \log \frac 1 \eps}$.

Since $\sigma\pa{H_{\La_L}}\subset [-2d - V_\infty, 2d +V_\infty]$, we have $\eta_{\La_L} \pa{[E,E + \eps]}=0$ unless $\abs{E} \le 2d +V_\infty + \frac 12$, 
so we get \eq{logHolderd1disc} if  $L \ge L_{d,V_\infty} \,  { \log \tfrac 1 \eps}$.
\end{proof}

\subsection{One-dimensional  Schr\" odinger operators}

The case $d=1$ of Theorem~\ref{thmmain}  is an immediate consequence of the following theorem. Note that one dimensional boxes are  intervals.

\begin{theorem}\label{thmd=1} Let $H$ be a Schr\" odinger operator as in \eq{schr} with  $d=1$. Given $E_0 \in \R$, there exists  $L_{V_{\infty},E_0}$ such that for all $ 0<\eps \le \tfrac 1 2$,  open  intervals $\Lambda=\Lambda_L$ with $L \ge{L_{V_{\infty},E_0}}{\log \frac 1 \eps} $,    and energies $E\le E_0$, we have 
 \beq\label{logHolderd1}
\eta_\La \pa{[E,E + \eps]} \le \frac {C_{V_\infty,E_0 }} {\log \frac 1 \eps}.
\eeq 
\end{theorem}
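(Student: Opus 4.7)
The plan is to mirror the structure used in the proof of Theorem~\ref{thmdisc}, replacing the discrete Laplacian argument by a short ODE estimate. Fix $E \le E_0$, $\eps \in ]0,\tfrac12]$, and a box $\La_L=\La_L(x_0)\subset \R$. With Dirichlet boundary condition set $P=\Chi_{[E,E+\eps]}(H_{\La_L})$ and $\rho=\eta_{\La_L}([E,E+\eps])=\tfrac{1}{L}\tr P$; assume $\rho>0$. Any $\psi\in\Ran P$ satisfies $\norm{(H_{\La_L}-E)\psi}_2\le\eps\norm{\psi}_2$, so
\[
\psi'' = (V-E)\psi - \eta \qtx{on} \La_L, \qquad \norm{\eta}_2\le\eps\norm{\psi}_2 .
\]
Since $\psi\in H^2(\La_L)\cap H^1_0(\La_L)$, both $\psi$ and $\psi'$ are continuous on $\overline{\La_L}$, so pointwise conditions on $\psi,\psi'$ define closed linear subspaces.

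I would then choose the scale $R$ roughly comparable to $1/\rho$ (with $R\le L$, enforced by requiring $L\ge L_{V_\infty,E_0}\log\tfrac1\eps$), and partition $\La_L$ by points $y_0<y_1<\dots<y_K$ with spacing at most $R$, where $K=\lceil L/R\rceil$. Let
\[
\cF=\set{\psi\in \Ran P;\; \psi(y_j)=\psi'(y_j)=0 \text{ for } j=1,\dots,K-1}.
\]
This imposes $2(K-1)$ linear conditions, so $\dim \cF\ge \rho L - 2K$; choosing $R\ge 4/\rho$ (and $R\le L$) gives $\dim\cF\ge \tfrac12\rho L\ge 1$. These are exactly the continuous analogues of the discrete ``double shell'' conditions in Theorem~\ref{thmdisc}: two conditions per interior partition point, matching the order of the ODE.

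The analytic heart of the argument is a pointwise bound for $\psi\in\cF$ with $\norm{\psi}_2=1$. On each subinterval $[y_j,y_{j+1}]$ the function $\psi$ solves the ODE above with zero initial data at $y_j$. Setting $\Phi(x)=\sqrt{|\psi(x)|^2+|\psi'(x)|^2}$, one gets $\Phi'\le C_{V_\infty,E_0}\,\Phi+|\eta|$, and integrating from $y_j$ (where $\Phi=0$) yields
\[
\sup_{[y_j,y_{j+1}]}|\psi|\le e^{C_{V_\infty,E_0}R}\,\sqrt{R}\,\norm{\eta}_{\L^2([y_j,y_{j+1}])} \le \eps\sqrt{R}\, e^{C_{V_\infty,E_0}R}.
\]
Taking the sup over subintervals, $\norm{\psi}_\infty\le \eps\sqrt{R}\,e^{C_{V_\infty,E_0}R}$. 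Now applying Lemma~\ref{lemL2Linfty} to $\cF$ with $\Omega=\La_L$ and $\P=dx/L$, one produces $\psi_0\in\cF$, $\norm{\psi_0}_2=1$, with $\sqrt{L}\,\norm{\psi_0}_\infty\ge \sqrt{\dim\cF}\ge\sqrt{\rho L/2}$, i.e.\ $\norm{\psi_0}_\infty\ge \sqrt{\rho/2}$. Combining with the upper bound and $R\asymp 1/\rho$ gives $\sqrt{\rho/2}\le \eps\sqrt{R}\,e^{C/\rho}$, and taking logarithms yields $\rho\le C_{V_\infty,E_0}/\log\tfrac1\eps$, which is \eqref{logHolderd1}.

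I expect the main technical nuisance to be bookkeeping rather than ideas: carefully choosing $R$ to satisfy simultaneously $R\ge 4/\rho$ (to guarantee $\dim\cF\ge\tfrac12\rho L$) and $R\le L$ (so the partition fits inside $\La_L$), and tracking the $V_\infty,E_0$ dependence through the Gronwall exponent so that the final constant $L_{V_\infty,E_0}$ is explicit. The ODE estimate itself is routine in one dimension; this is precisely why the continuous $d=1$ case is analogous to the discrete case and admits the same log-Hölder exponent $\kappa_1=1$.
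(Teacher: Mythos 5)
Your proposal is correct and follows essentially the same route as the paper's proof: the same subspace $\cF_R$ cut out by the conditions $\psi(a_j)=\psi'(a_j)=0$ at a partition of mesh $R\asymp 1/\rho$, the same Gronwall bound giving $\norm{\psi}_\infty\lesssim \eps\,\e^{CR}$ on $\cF_R$ (your Cauchy--Schwarz factor $\sqrt{R}$ versus the paper's $(2K)^{-1/2}$ is immaterial), and the same use of Lemma~\ref{lemL2Linfty} to force $\norm{\psi_0}_\infty\ge\sqrt{\rho/2}$. The bookkeeping you flag (the trivial case $\rho\le 4/L$, and restricting to $E\ge -V_\infty-\tfrac12$ via $\sigma(H_\La)\subset[-V_\infty,\infty[$ to get constants depending only on $V_\infty,E_0$) is handled in the paper exactly as you anticipate.
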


\begin{proof}
    Let $\La=\La_L =]a_0, a_0 + L[$, $E \in \R$, $\eps \in ]0,\frac 1 2]$.  We  set  $P=\Chi_{[E,E + \eps]}(H_\Lambda) $. Recall that    $\Ran P \Chi_\Lambda \subset \cD (\Delta_{\Lambda})\subset C^1(\La)$ since $d=1$,  and note that  
\beq \label{epsineq1d}
\norm{\pa{H_\La -E}\psi} \le \eps \norm{\psi} \qtx{for all} \psi \in \Ran P.
\eeq

Given  $0<R <L$,  set $a_j= a_0 +j R$ for  $j=1,2,\ldots, \left \lceil \frac L R  \right  \rceil - 1$. We introduce  the  vector space  
\begin{align}
\cF_R : = \set{\psi \in \Ran P ; \;\;  \psi(a_j)=\psi^\pr(a_j)=0  \; \text{for}\;  j=1,2,\ldots,   \left \lceil \frac L R  \right  \rceil - 1}.
\end{align}
Given $\psi \in \cF_R$ and  $j=1,\ldots,  \left \lceil \frac L R  \right  \rceil - 1$,  it follows from Gronwall's inequality (see  \cite{How}), $ \psi(a_j)=\psi^\pr(a_j)=0 $, and \eq{epsineq1d} that for all $x \in ]a_j -R, a_j+R[\cap \La$ we have 
\begin{align}
\abs{\psi(x)} \le \e^{K\abs{x-a_j}} \abs{\int_{a_j}^x  \e^{-K\abs{y-a_j}}\abs{\pa{H_\La -E}\psi(y)} \, \di y}\le  \pa{2 K}^{-\frac 12} \e^{KR} \eps \norm{\psi},
\end{align}
where $K= 1 + \norm{V-E}_\infty$.
Since $\La$ is the union of these intervals, we conclude that
\beq\label{Gronbd}
\norm{\psi }_\infty \le  \pa{2 K}^{-\frac 12} \e^{KR} \eps \norm{\psi} \qtx{for all} \psi \in \cF_R.
\eeq 

We now assume that  
\beq \label{assump1}
\rho:=\eta_{\La_L} \pa{[E,E + \eps]}= \tfrac 1 L \tr P >\tfrac 4 L,
\eeq
 since otherwise there is nothing to prove for large $L$..
Taking   $R=\frac 4 \rho$, it  follows from  \eq{assump1} that
\beq\label{dimF} 
\dim \cF_R \ge \rho L - 2 \pa{ \left \lceil \tfrac L R  \right  \rceil - 1} \ge \rho L - 2\tfrac L R  =  \tfrac 1 2\rho L  >1.
\eeq
Applying Lemma~\ref{lemL2Linfty}, we obtain $\psi_0 \in \cF_R$, $\psi_0 \not = 0$, such that
\beq\label{psi>}
\norm{\psi_0}_\infty \ge \sqrt{\frac {\dim \cF_R} {L}} \norm{\psi_0}\ge  \sqrt{ \tfrac 1 2\rho} \norm{\psi_0}.
\eeq
It follows from \eq{Gronbd} and \eq{psi>} that
\beq
 \sqrt{ \tfrac 1 2\rho}  \le  \pa{2 K}^{-\frac 12} \e^{KR} \eps=  \pa{2 K}^{-\frac 12} \e^{\frac {4K} \rho} \eps .
\eeq

Thus, we get
\beq
\rho \le \frac {8K}  { \log \frac 1 \eps},
\eeq
if $L >  \frac 4 \rho \ge \frac  { \log \frac 1 \eps} {2K} $.

Since $\sigma\pa{H_\La}\subset [ - V_\infty, \infty[$, we have $\eta_\La \pa{[E,E + \eps]}=0$ unless $E \ge -V_\infty - \frac 12$.    Thus,  given  $E_0 \in \R$, there exists  $L_{V_{\infty},E_0}$ such that, for all $ 0<\eps \le \tfrac 1 2$,  open  intervals $\Lambda=\Lambda_L$ with $L \ge {L_{V_{\infty},E_0}}{\log \frac 1 \eps} $,    and energies $E\le E_0$, we have 
 \eq{logHolderd1}. 
 \end{proof}

\section{Multi-dimensional  Schr\" odinger operators}\label{secmultid}

To prove   Theorem~\ref{thmmain} for  $d=2,3$, we will select a class of approximate eigenfunctions for which we can establish uniform local upper bounds, and pick an  approximate eigenfunction for which we have a global  lower bound for the global upper bound.  The  local upper bounds will come from the local behavior of  approximate solutions of  the stationary Schr\" odinger equation (Theorem~\ref{lem0psi0}); the global upper bound will come from the quantitative unique continuation principle (Theorem~\ref{thmucp}).

Given $x\in \R^{d}$ and  $\delta>0$, we set 
$B(x,\delta):= \set{y \in \R^{d}; \, \abs{y-x}<\delta}$.

\subsection{Local behavior of  approximate solutions of  the stationary Schr\" odinger equation}
\label{subseclocalb}

\begin{theorem}\label{lem0psi0}    Let $ \Omega\subset \R^d$ be an open subset,  where $d=2,3,\ldots$, and fix a real valued function $ W\in \L^\infty(\Omega)$,    Let $B(x_0, r_0)\subset \Omega$ for some $x_0\in \R^d$ and $r_0>0$.  Suppose  $\cF $ is  a linear subspace of $H^2(\Omega)$ such that 
\beq\label{Feps0}
\norm{\pa{-\Delta + W}\psi}_{\L^\infty(B(x_0, r_0))} \le  C_\cF  \norm{\psi}_{\L^2(\Omega)}\qtx{for all} \psi \in \cF .
\eeq
Then there exist  constants $\gamma_d>0$ and   $0< r_1=r_1 \pa{d,W_\infty} < r_0$,   where $W_\infty=\norm{W}_{\L^\infty(\Omega)}$, with the property that   for all $N\in \N$ there is a linear subspace $\cF_N$ of  $\cF$, with
\beq
\dim \cF_N \ge \dim \cF  - \gamma_d N^{d-1},
\eeq
 such that for all $\psi \in \cF_N$ we have
\begin{align}\label{phiYN20}
 \abs{\psi(x)}
\le \pa{{C}_{d,{W}_\infty,r_1}^{N^2} \abs{x-x_0}^{N+1} + C_\cF}\norm{\psi}_{\L^2(\Omega)} \qtx{for all} x \in {B}(x_0,  {r_1} ).
  \end{align}

\end{theorem}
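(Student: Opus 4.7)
The plan is to deduce Theorem~\ref{lem0psi0} from the quantitative Bers-type decomposition that Lemma~\ref{lem0psi} already provides, by carving out a finite-codimensional subspace of $\cF$ on which the low-order harmonic components of every element are forced to vanish. First I would apply Lemma~\ref{lem0psi} to each $\psi\in\cF$ at the center $x_0$: this produces a radius $r_1=r_1(d,W_\infty)<r_0$ and, for every $N\in\N$, a pointwise decomposition of the form
\beq
\psi(x)=\sum_{k=0}^{N}P^{\psi}_k(x-x_0)+R^{\psi}_N(x)\qtx{on} B(x_0,r_1),
\eeq
in which each $P^{\psi}_k$ is a homogeneous harmonic polynomial of degree $k$ depending linearly on $\psi$ (since $\psi\mapsto\pa{-\Delta+W}\psi$ is linear), and the remainder satisfies
\beq
\abs{R^{\psi}_N(x)}\le\pa{C_{d,W_\infty,r_1}^{N^2}\abs{x-x_0}^{N+1}+C_\cF}\norm{\psi}_{\L^2(\Omega)}.
\eeq
The additive $C_\cF\norm{\psi}_{\L^2(\Omega)}$ piece records the defect from $\psi$ being only an approximate solution in the sense of hypothesis \eq{Feps0}; for exact solutions of $\pa{-\Delta+W}\psi=0$ it would be absent.

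Next, the assignment $T_N:\psi\mapsto\pa{P^{\psi}_0,P^{\psi}_1,\ldots,P^{\psi}_N}$ is a linear map from $\cF$ into the finite-dimensional space $\H_{\le N}$ of harmonic polynomials on $\R^d$ of total degree at most $N$. A standard count, using that the Laplacian is surjective from homogeneous polynomials of degree $k$ onto those of degree $k-2$, gives
\beq
\dim\H_{\le N}=\binom{N+d-1}{d-1}+\binom{N+d-2}{d-1}\le \gamma_d\, N^{d-1}
\eeq
for a purely dimensional constant $\gamma_d>0$. I then take $\cF_N:=\ker T_N$. By rank-nullity, $\dim\cF_N\ge\dim\cF-\dim\H_{\le N}\ge\dim\cF-\gamma_d N^{d-1}$, and for every $\psi\in\cF_N$ the decomposition collapses to $\psi(x)=R^{\psi}_N(x)$ on $B(x_0,r_1)$, which is precisely the bound \eq{phiYN20}.

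The main obstacle is not this linear-algebraic packaging but the content behind Lemma~\ref{lem0psi}: producing the quantitative Bers/Hartman--Wintner expansion with the explicit $C_{d,W_\infty,r_1}^{N^2}$ growth in the constant and with the $C_\cF$ error term correctly propagated out of the approximate equation through all orders of the recursion that defines the $P^{\psi}_k$. Granted that lemma, the argument above is essentially automatic, modulo the elementary computation of $\dim\H_{\le N}$.
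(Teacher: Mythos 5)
Your proposal is correct and takes essentially the same route as the paper: Theorem~\ref{lem0psi0} is declared an immediate consequence of Lemma~\ref{lem0psi}, whose proof combines the quantitative Bers decomposition of Lemma~\ref{lemBers} with a linear-algebra step that is exactly your ``kill the harmonic parts up to degree $N$'' construction, with the same codimension count $\dim \H_{\le N}\up{d} \le \gamma_d N^{d-1}$ (the paper realizes your $\ker T_N$ as a nested sequence of kernels of the maps $Y_0,\dots,Y_N$, since each $Y_k$ is only defined on the subspace where the lower-order parts already vanish). The one cosmetic difference is that the paper introduces the $C_\cF$ error only once, by comparing $\psi$ in sup norm with the exact solution $Z_{r_2}\psi$ of a Dirichlet problem and then running the Bers recursion entirely on exact solutions, rather than propagating the defect through every order of the recursion as you envisage.
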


We take $d=2,3,\ldots$, and set $\N_0=  \set{0} \cup \N$.  We consider sites $x\in \R^d$, partial derivatives $\partial_j= \frac \partial {\partial x_j}$ for $j=1,2,\ldots,d$, multi-indices  $\alpha \in \N_0^d$, and set
\beq
x^\alpha=\prod_{j=1}^d x_j^{\alpha_j}, \quad D^\alpha= \prod_{j=1}^d \partial_j^{\alpha_j}, \quad \abs{\alpha}=\sum_{j=1}^d  \abs{\alpha_j}, \quad \alpha!= \prod_{j=1}^d \alpha_j!.
\eeq

We let $\H_m\up{d}=\H_m(\R^d)$ denote the vector space of homogenous harmonic polynomials on $\R^d$ of degree $m \in  \N_0$, and recall that \cite[Proposition~5.8 and exercises]{ABR} we have $\dim \H_0\up{d}=1$, $\dim \H_1\up{d}=d$, and, for $m=2,3,\ldots$,
\beq
\dim \H_m\up{d} = \binom{d + m-1}{d-1}- \binom{d + m-3}{d-1}.
\eeq
In particular,  we have
\beq
\dim \H_m\up{2}= 2 \qtx{and} \dim \H_m\up{3}= 2m +1 \qtx{for} m=2,3,\ldots,
\eeq
and  $\dim \H_m\up{d} < \dim \H_{m+1}\up{d}$ for $d>2$. Moreover 
\beq\label{limdim}
\lim_{m\to \infty}  \frac  {\dim \H_m\up{d}} {m^{d -2}}= \frac 2 {(d-2)!}\qtx{for} d \ge 2. 
\eeq
We also  define $ \H_{\le N}\up{d}=\bigoplus_{m=0}^N  \H_m\up{d}$, the vector space of  harmonic polynomials on $\R^d$ of degree $\le N$. It follows from \eq{limdim} that for $d=2,3,\ldots$ there exists a constant $\gamma_d>0$ such that
\beq\label{gammadN}
\dim  \H_{\le N}\up{d} =\sum_{m=0}^N \dim  \H_m\up{d}\le \gamma_d N^{d-1} \qtx{for all }  N \in \N.
\eeq

Let
 \begin{align}
\bPhi(x)= \bPhi_d(x) :=\begin{cases} \pa{d(d-2)\omega_d}^{-1}\abs{x}^{-d +2} & \text{if}\quad d=3,4,\ldots\\
-\tfrac 1{2\pi} \log \abs{x} & \text{if}\quad d=2
\end{cases}.
\end{align}
$\bPhi(x)$ is the fundamental solution to Laplace's equation; $\omega_d$ denotes the volume of the unit ball in $\R^d$.  In particular, 
\beq\label{deltabPhi}
-\Delta \bPhi (x) =\delta(x) \qtx{on} \R^d,
\eeq
and
\beq
\abs{D^\alpha\bPhi(x)}\le C_{d,\abs{\alpha}} \abs{x}^{-d +2- \abs{\alpha}}.
\eeq

Given  $ \Omega=B(x, r)$ for some $x\in \R^d$ and $r>0$ and  $W\in \L^{\infty}(\Omega)$ real valued, we set $W_\infty=\norm{W}_{\L^\infty(\Omega)}$,   and consider the   the stationary Schr\" odinger equation
 \beq\label{Poissoneq}
-\Delta \phi +W \phi =0 \qtx{a.e.\  on} \Omega.
\eeq
We let $\cE_0(\Omega)= \cE_0(\Omega,W)$ denote  the vector subspace formed by solutions $\phi \in H^2(\Omega)$.  
We define linear subspaces  
\beq
\cE_N(\Omega)= \set{\phi \in \cE_0(\Omega); \;\;  \limsup_{x\to x_0} \tfrac {\abs{\phi(x)}}{\abs{x-x_0}^N}< \infty} \qtx{for} N \in \N.
\eeq
  Note  that  $\cE_1(\Omega)=\set{ \phi \in \cE_0(\Omega); \; \phi(x_0)=0}$,  $\cE_N(\Omega)\supset \cE_{N+1}(\Omega)$ for all $N \in \N_0$, and $\cap_{N=0}^\infty \cE_N(\Omega) =\set{0}$ by the unique continuation principle.

A solution of the equation \eq{Poissoneq} admits a local decomposition into a homogeneous harmonic polynomial and a lower order term  \cite{HW,Be}.
The following lemma is a quantitative version of this decomposition; it gives an explicit  estimate of the lower order term.

\begin{lemma}\label{lemBers} Let $ \Omega=B(x_0, 3r_0)$ for some $x_0\in \R^d$ and $r_0>0$,  $d=2,3,\ldots$, and fix  a real valued function $W\in \L^{\infty}(\Omega)$.   For all $N \in \N_0$ there exists a linear map  $Y_N\up{\Omega}\colon \cE_N (\Omega) \to 
 \H_N\up{d}$ such that for all $\phi \in \cE_N(\Omega)$ we have 
 \begin{align}\label{phiYN}
& \abs{\phi(x) - \pa{Y_N\up{\Omega} \phi}(x-x_0)} \\
& \notag   \quad\le r_0^{-\frac d 2}  \pa{r_0^{-1}\widetilde{C}_{d,r_0^2W_{\infty}}}^{N+1} \pa{\tfrac {16} 3 }^{\frac {(N+1)(N+2)}2}\pa{(N+1)!}^{d-2}\abs{x-x_0}^{N+1}\norm{\phi}_{\L^2(\Omega)}
  \end{align} 
 for all $ x \in  \overline{B}(x_0,\tfrac {r_0}2)$.  As a consequence,
 for all   $N \in \N_0$ we have
\beq\label{dimE}
\cE_{N+1}(\Omega)=\ker Y_N\up{\Omega} \qtx{and} \dim \cE_{N+1}(\Omega) \ge  \dim \cE_{N}(\Omega) - \dim  \H_N\up{d}.
\eeq 
In particular, if $ \cJ$ is a vector subspace of $\cE_0(\Omega)$ we have
\beq\label{dimJ}
\dim \cJ \cap \cE_{N+1}(\Omega) \ge \dim \cJ - \gamma_d N^{d-1} \qtx{for all} N \in \N,
\eeq
where $\gamma_d$ is the constant in \eq{gammadN}.
\end{lemma}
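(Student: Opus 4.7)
I would begin by translating so that $x_0 = 0$, reducing to $\Omega = B(0, 3r_0)$. The backbone of the argument is the decomposition $\phi = h + v$ on $B(0, r_0)$: fix a smooth radial cutoff $\chi$ with $\chi = 1$ on $B(0, r_0)$ and $\supp \chi \subset B(0, 2r_0) \subset \Omega$, set $v := -\bPhi * (\chi W \phi)$, and let $h := \phi - v$. By \eq{deltabPhi}, $\Delta v = \chi W \phi$ on $\R^d$, which coincides with $\Delta \phi = W\phi$ on $B(0, r_0)$, so $h$ is harmonic there. Expanding $h$ into its spherical harmonic series $h = \sum_{m \ge 0} h_m$ with $h_m \in \H_m\up{d}$, valid on $B(0, r_0)$, together with the standard interior bound $\sup_{|x| \le r} \abs{h_m(x)} \le C_d r_0^{-d/2} (r/r_0)^m \norm{h}_{\L^2(B(0,r_0))}$, sets up the framework. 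The Newton-potential convolution $v$ is controlled in $\L^\infty(B(0,r_0))$ via $\abs{\bPhi(x)} \le C_d \abs{x}^{-d+2}$ together with interior elliptic estimates on $\phi$, giving $\norm{h}_{\L^2(B(0, r_0))} \le C_{d, r_0^2 W_\infty} \norm{\phi}_{\L^2(\Omega)}$.

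The map $Y_N^\Omega$ is defined by extracting the degree-$N$ homogeneous harmonic polynomial that appears in the local expansion of $\phi$ around $0$. Concretely, I would combine $h_N$ with the degree-$N$ polynomial piece of $v$ at $0$, itself obtained by Taylor-expanding the kernel $\bPhi(x-y)$ in $x$ for $\abs{y} \ge 2\abs{x}$ and handling the near field $\abs{y} < 2\abs{x}$ using the inductive bound on $\phi \in \cE_N(\Omega)$. Bers' observation from \cite{HW,Be} then guarantees that the resulting degree-$N$ polynomial is harmonic: since $\Delta\phi = W\phi$ vanishes to order $N$ at $0$, so does the Laplacian of the degree-$N$ Taylor piece of $\phi$, forcing harmonicity. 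The quantitative remainder bound \eq{phiYN} is derived by iterating the decomposition across $N+1$ nested scales, peeling off homogeneous harmonic layers one at a time on balls whose radii shrink geometrically. The $\pa{\tfrac{16}{3}}^{(N+1)(N+2)/2}$ factor records the compounding scale loss from the nested radii; the $\widetilde C_{d, r_0^2 W_\infty}^{N+1}$ factor records the per-level Newton-potential and interior-regularity losses; and the $((N+1)!)^{d-2}$ factor reflects the growth $\dim \H_m\up{d} \sim m^{d-2}$, which governs the number of spherical harmonic modes that must be tracked at level $m$. The main technical obstacle is the meticulous bookkeeping required to arrive at precisely these constants; the ingredients (Newton potential estimates, harmonic interior bounds, elliptic regularity) are standard once the iteration scheme is in place.

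Finally, \eq{dimE} and \eq{dimJ} are consequences of the estimate \eq{phiYN} together with elementary linear algebra. The map $Y_N^\Omega$ is linear by construction, so rank-nullity gives $\dim \ker Y_N^\Omega \ge \dim \cE_N(\Omega) - \dim \H_N\up{d}$. The identity $\ker Y_N^\Omega = \cE_{N+1}(\Omega)$ follows from \eq{phiYN}: if $Y_N^\Omega \phi = 0$ then $\abs{\phi(x)} \le C \abs{x}^{N+1}$ near $0$, placing $\phi \in \cE_{N+1}(\Omega)$; conversely, if $\phi \in \cE_{N+1}(\Omega)$ then the degree-$N$ homogeneous polynomial $Y_N^\Omega \phi$ must be $O(\abs{x}^{N+1})$ near $0$ and is therefore zero. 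For \eq{dimJ}, the subspace $\cJ \cap \cE_{N+1}(\Omega)$ is the simultaneous kernel in $\cJ$ of the linear maps $Y_0^\Omega, Y_1^\Omega, \ldots, Y_N^\Omega$, whose combined codimension in $\cJ$ is at most $\sum_{m=0}^N \dim \H_m\up{d} \le \gamma_d N^{d-1}$ by \eq{gammadN}.
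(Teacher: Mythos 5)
Your proposal follows essentially the same route as the paper: the Newtonian-potential split $\phi = h + v$, the homogeneous harmonic expansion of $h$, the Taylor expansion of $\bPhi(\cdot - y)$ in harmonic polynomials for the far field combined with the inductive bound $\abs{\phi(x)} \le C_N \norm{\phi}\abs{x}^N$ for the near field $\abs{y} < 2\abs{x}$, and the linear-algebra deduction of \eq{dimE}--\eq{dimJ} are all exactly what the paper does. One small correction to your accounting of constants: the paper's induction runs on the \emph{fixed} ball $\overline{B}(0,\tfrac 12)$ after rescaling $\Omega$ to $B(0,3)$ (the rescaling is what produces the $r_0^{-\frac d2}\pa{r_0^{-1}\widetilde{C}_{d,r_0^2W_\infty}}^{N+1}$ prefactor), and the factor $\pa{\tfrac{16}{3}}^{(N+1)(N+2)/2}$ is the product over the induction of the per-step kernel-expansion constants $\pa{\tfrac{16}{3}}^{m+1}$, not a loss from geometrically shrinking radii --- if the balls genuinely shrank at each step you would end on a ball of radius $\sim 2^{-N}r_0$ and would not obtain \eq{phiYN} on all of $\overline{B}(x_0,\tfrac{r_0}{2})$.
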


\begin{proof}  We  prove the lemma for $\Omega=B(0,3)$; the general case  then follows                by   translating and dilating. We set   $\Omega^\pr=B\pa{0,\frac 32}$, and  note that   $ \phi \in \cE_0$  ($\cE_n= \cE_n(\Omega)$) satisfies   elliptic regularity estimates:
 \beq \label{ellipticreg1}
 \norm{\phi}_{\L^\infty(\Omega^\pr)}  \le  C_{d,W_\infty}  \norm{\phi}_{\L^2(\Omega)}, 
 \eeq
 \beq \label{ellipticreg2}
    \norm{\nabla\phi}_{\L^\infty(B(0,1))}\le C_{d,{W}_\infty} \norm{\phi}_{\L^\infty(\Omega^\pr)} .
 \eeq
 
The estimate \eq{ellipticreg1} follows immediately from   \cite[Theorem~8.17]{GT}.  If we knew  $\phi \in C^2(\Omega)\cap\cE_0$,  the estimate  \eq{ellipticreg2} would follow directly from \cite[Theorem~8.32]{GT}.  To prove  \eq{ellipticreg2} for  arbitrary  $ \phi \in \cE_0$, we fix a mollifier  $\alpha \in C^\infty(\R^d)$ (i.e., $\alpha \ge 0$,  $\int \alpha(x)\,  \di x=1$, $\supp \alpha \subset B(0,1)$),  let $\alpha_n (x)=  n^d \alpha(nx) $ for $n\in \N$, and define $\phi_n =\alpha_n \ast \phi$
on $\R^d$. (We extend $\phi$ to $\R^d$ by $\phi(x)=0$ for $x \notin \Omega$.)  We have $\phi_n\in C^\infty(\R^d)$ and    $\phi_n \to \phi$ in $H^2\pa{\Omega^{\pr}}$.  (See \cite[Chapter~7]{GT}.) 
 Since $ \phi \in \cE_0$, we have
 \begin{align}
-\Delta \phi_n = \alpha_n \ast(-\Delta) \phi = \alpha_n \ast \pa{(-\Delta+W) \phi}-  \alpha_n \ast\pa{W \phi}=-  \alpha_n \ast\pa{W \phi}\qtx{on} \Omega^\pr.
\end{align}
In addition,  setting $\Omega^{\pr\pr}= B(0,\frac 54)$, taking $n\ge 4$, and using Young's inequality for convolutions, we have
\begin{align}\label{convnorms}
\norm{\phi_n}_{\L^\infty(\Omega^{\pr\pr})}&\le \norm{\phi}_{\L^\infty(\Omega^{\pr})},\\
 \norm{(-\Delta+W) \phi_n}_{\L^\infty\pa{\Omega^{\pr\pr}}} &\le \norm{ \alpha_n \ast\pa{W \phi}}_{\L^\infty\pa{\Omega^{\pr\pr}}}+ \norm{ W \phi_n}_{\L^\infty(\Omega^{\pr\pr})}
   \le 2 W_\infty  \norm{\phi}_{\L^\infty(\Omega^{\pr})}. \notag
 \end{align}
Appealing to \cite[Theorem~8.32]{GT}, and using \eq{convnorms}, we get
  \begin{align}  \label{ellipticreg2628}
    \norm{\nabla\phi_n}_{\L^\infty(B(0,1))} &  \le C_{d,{W}_\infty} \pa{\norm{\phi_n}_{\L^\infty\pa{\Omega^{\pr\pr}}}+ \norm{(-\Delta+W) \phi_n}_{\L^\infty\pa{\Omega^{\pr\pr}}}}\\
    & \le C^\pr_{d,{W}_\infty} \norm{\phi}_{\L^\infty(\Omega^\pr)} \qtx{for} n\ge 4. \notag
 \end{align}
 Since  we can find a subsequence $\phi_{n_k}$ such that  $\nabla \phi_{n_k}\to \nabla \phi$\, a.e.\ on $\Omega^{\pr}$, \eq{ellipticreg2} follows from \eq{ellipticreg2628}.

  Given  $ \phi \in \cE_0$   we consider its Newtonian potential given by
\beq
\psi(x) = -  \int_{\Omega^\pr}  W(y) \phi(y) \bPhi(x-y)\, \di y \qtx{for} x \in \R^d.
\eeq
In view of  \eq{ellipticreg1},   we have
\beq\label{Newtpot}
\abs{\psi(x)} \le {W_{\infty}}\norm{\phi}_{\L^{\infty}(\Omega^\pr )} \norm{\bPhi}_{\L^{1}(\Omega)} \le  C_{d,{W}_\infty}{W_{\infty}}  \norm{\phi}_{\L^2(\Omega)}\qtx{for all} x \in  \Omega^\pr .
\eeq

It follows from \eq{deltabPhi} that $\Delta \psi = W  \phi$ weakly in $\Omega^\pr$. 
Thus, letting  $h= \phi- \psi$ we have  $\Delta h=0$ weakly in $\Omega^\pr$, so we conclude that $h$ is a harmonic function in $\Omega^\pr \supset \overline{B}(0,1)$.  In particular (see \cite[Corollary~5.34 and its proof]{ABR}), $h$ is real analytic in  $\Omega^\pr $ and
\beq\label{harmonictaylor}
h(x)=\sum_{m=0}^\infty  p_m(x) \qtx{for all} x \in B(0,1),
\eeq
where  $p_m \in \H_m\up{d}$  for all $m=0,1,\ldots$, and  for  $m=1,2,\ldots$  we have
\beq\label{harmonicderbd}
\abs{p_m(x)}\le C_d\,  m^{d-2} \abs{x}^m \sup_{y \in \partial B(0,1)} \abs{h(y)}\qtx{for all} x \in B(0,1).
\eeq
In addition,  it follows from the mean value property  that for all $y \in \partial B(0,1)$ we have
\begin{align}
\abs{h(y)}   \le \tfrac 1{ \abs{B\pa{y,\frac 12}}} \int_{B\pa{y,\frac 12}} \abs{h(y^\pr)}\, \di y^\pr \le C_{d,{W}_\infty} \norm{\phi}_{\L^2(\Omega)},
\end{align}
using  \eq{ellipticreg1} and  \eq{Newtpot}.
Thus, for all $m=1,2,\ldots$ it follows from \eq{harmonicderbd}  that
\beq\label{harmonicderbd2}
\abs{p_m(x)}\le C_{d,{W}_\infty}m^{d-2} { \norm{\phi}_{\L^2(\Omega)}}   \abs{x}^m \qtx{for all} x \in B(0,1).
\eeq
Setting  $h_{N}= \sum_{m=0}^N  p_m(x)\in  \H_{\le N}\up{d}$, it follows that 
\beq\label{esth}
\abs{h(x) -h_N(x)} \le C_{d,{W}_\infty} { \norm{\phi}_{\L^2(\Omega)}} (N+1)^{d-2} \abs{x}^{N+1} \qtx{for all} x \in \overline{B}\pa{0,\tfrac 12}.
\eeq

For each  $y \in \R^d\setminus\set{0}$ we consider $\bPhi_y(x)=\bPhi(x-y)$, a harmonic function on $\R^d\setminus \set{y}$.  In particular, $\bPhi_y(x)$ is real analytic in $B(0,\abs{y})$, so, 
defining
\beq
  J_m(x ,y)= \sum_{\alpha \in \N_0^d,\, \abs{\alpha}=m} \tfrac 1 {\alpha!}{D^\alpha \bPhi}(y) x^\alpha \qtx{for} x\in  \R^d,
   \eeq 
we have (see \cite{ABR}) 
\beq
\bPhi(x-y)= \bPhi_y(x)= \sum_{m=0}^\infty  J_m(x,y) \qtx{for all} x \in B(0, \abs{y}),
\eeq
the series converging absolutely and uniformly on compact subsets of $B(0, \abs{y})$.
Moreover,   $ J_m(\cdot ,y)\in \H_m\up{d}$, and     for all $ y \in \Z^d$ and   $m=1,2,\ldots$   we have (see \cite[Corollary~5.34 and its proof]{ABR}) that 
\begin{align}\label{Jmbd}
\abs{J_m(x ,y)}\le C_d\, m^{d-2}\pa{\tfrac {4\abs{x}}{3\abs{y}}}^m \sup_{x^\pr \in \partial B\pa{0,\tfrac 3 4\abs{y}}} \abs{\bPhi_y(x^\pr)} \le  C_d\, m^{d-2} \pa{\tfrac {4\abs{x}}{3\abs{y}}}^m\bPhi(\tfrac y 4),
\end{align}
for all $x \in B\pa{0,\tfrac 3 4\abs{y}}$.
Setting $\bPhi_{y,N}(x)= \sum_{m=0}^N J_m(x ,y)\in  \H_{\le N}\up{d}$, it follows that for $ x \in  \overline{B}(0,\tfrac 1 2 \abs{y})$ we have 
\beq\label{phiphiharmonic}
\abs{\bPhi_y(x)- \bPhi_{y,N}(x)}\le C_d (N+1)^{d-2}\pa{\tfrac {4\abs{x}}{3\abs{y}}}^{N+1}\bPhi(\tfrac y 4).
\eeq

We now  proceed by induction.  We define $Y_0\colon \cE_0 \to 
 \H_0\up{d}$ by $Y_0 \phi= \phi(0)$. Given $\phi \in \cE_0$, it follows from the mean value theorem and the elliptic regularity estimates \eq{ellipticreg1} and  \eq{ellipticreg2} that
 \beq
  \abs{\phi(x)-\phi(0)} \le  \sup_{y \in  B(0,1)}   \abs{\nabla\phi(y)}\abs{x} \le C_{d,{W}_\infty}  { \norm{\phi}_{\L^2(\Omega)}} \abs{x} \qtx{for} x \in \overline{B}(0,1).
 \eeq
Thus the lemma holds for $N=0$.

We now let $N \in \N$ and suppose that the lemma is valid for $N-1$. If  $\phi \in \cE_N$, it follows that   $\phi \in \cE_{N-1}$ with $Y_{N-1} \phi=0$, so by the induction hypothesis
\beq\label{indN}
 \abs{\phi(x)} \le C_{N}{ \norm{\phi}_{\L^2(\Omega)}}\abs{x}^N\qtx{for all} x \in   \overline{B}\pa{0,\tfrac 12},
\eeq
where
\beq\label{CN}
C_{N}= \widetilde{C}_{d,{W}_{\infty}}^{N}\pa{\tfrac {16} 3 }^{\frac {N(N+1)}2} \pa{N!}^{d-2}.
\eeq
Using \eq{Jmbd} and \eq{indN}, we  define
\beq
\psi_N(x)= -  \int_{\Omega^\pr} W(y) \phi(y) \bPhi_{y,N}(x)\, \di y\in  \H_{\le N}\up{d}.
\eeq
We  fix $ x \in   \overline{B}\pa{0,\tfrac 12}$  and  estimate
 \begin{align}\label{estpsi1}
\abs{\psi(x)- \psi_{N}(x)} \le 
{W}_{\infty} \int_{\Omega^\pr}\abs{\phi(y)}\abs{ \bPhi_{y,>N}(x)}\, \di y ,
\end{align}
where $ \bPhi_{y,>N}(x)= \bPhi_y(x)-\bPhi_{y,N}(x)$.  Appealing to \eq{phiphiharmonic} and \eq{indN}, we get 
\begin{align}\label{estpsi2}
\int_{\overline{B}(0,\tfrac 12) \setminus {B}(0,2\abs{x})}  \abs{\phi(y)}\abs{ \bPhi_{y,>N}(x)}\, \di y
\le  C_d C_{N}\norm{\phi}_{\L^2(\Omega)} (N+1)^{d-2}\pa{\tfrac 4 3}^{N+1}\abs{x}^{N+1}.
\end{align}
If  $y \notin {B}(0,2\abs{x})$ we have 
$ \abs{y} \ge 2 \abs{x}\ge 1$, and hence, using  \eq{phiphiharmonic}, 
\begin{align}\label{estpsi3}
&\int_{\Omega^\pr \setminus\pa{{B}(0,2\abs{x}) \cup \overline{B}(0,\tfrac 12) }}  \abs{\phi(y)}\abs{ \bPhi_{y,>N}(x)}\, \di y\\ \notag
& \qquad \qquad  \qquad  \le  C_d  (N+1)^{d-2}\pa{\tfrac 4 3}^{N+1}\bPhi(\tfrac 1 4 ) \abs{x}^{N+1}\int_{\Omega^\pr }  \abs{\phi(y)} \, \di y  \\ \notag
&\qquad  \qquad  \qquad  \le  C_d  (N+1)^{d-2}\pa{\tfrac 4 3}^{N+1}\abs{x}^{N+1}{ \norm{\phi}_{\L^2(\Omega)}}.
\end{align}
 Using \eq{Jmbd} and \eq{indN}, we get
 \begin{align}\label{estpsi4}
& \int_{\overline{B}(0,\tfrac 12) \cap  {B}(0,2\abs{x})}  \abs{\phi(y)} \abs{ \bPhi_{y,>N}(x)}\, \di y\\
\notag &\hskip20pt  \le{C_N { \norm{\phi}_{\L^2(\Omega)}}} \int_{\overline{B}(0,\tfrac 12) \cap  {B}(0,2\abs{x})}  \abs{y}^{N}\abs{ \bPhi_{y,>N}(x)}\, \di y  \\
& \notag \hskip20pt \le   
{C_N { \norm{\phi}_{\L^2(\Omega)}}} \int_{\overline{B}(0,\tfrac 12)   \cap B(0,2\abs{x})}  \abs{y}^{N}\abs{\bPhi(x-y)}\, \di y\\
 &\notag \hskip40pt + C_d{C_N { \norm{\phi}_{\L^2(\Omega)}}}\sum_{m=0}^N  m^{d-2}\pa{\tfrac 43 \abs{x}}^m\int_{\overline{B}(0,\tfrac 12)  \cap B(0,2\abs{x})}  \abs{y}^{N-m} \abs{\bPhi(\tfrac y 4)}\, \di y \\
 &\notag \hskip20pt  \le  C_d{C_N { \norm{\phi}_{\L^2(\Omega)}}}\pa{1+  N^{d-2}\pa{\tfrac 43 }^{N+1}}\abs{x}^{N+1},
\end{align}
where we used $\abs{x-y}\le 3 \abs{x}$ for $y \in {B}(0,2\abs{x})$. (Note that we get $\abs{x}^{N+2}$ if $d \ge 3$ and $\abs{x}^{(N+2)- }$ if $d=2$.)
Also  using \eq{Jmbd}, we get
 \begin{align}
 \int_{\Omega^\pr \setminus  \overline{B}(0,\tfrac 12) }  \abs{\phi(y)} \abs{ \bPhi_{y,>N}(x)}\, \di y & \le   
 \int_{\Omega^\pr \setminus  \overline{B}(0,\tfrac 12) }  \abs{\phi(y)} \abs{\bPhi(x-y)}\, \di y\\
 &\notag \hskip-80pt +C_d \sum_{m=0}^N  m^{d-2}\pa{\tfrac 43 \abs{x}}^m \int_{\Omega^\pr  \setminus  \overline{B}(0,\tfrac 12) } \abs{\phi(y)}  \abs{y}^{-m} \abs{\bPhi(\tfrac y 4)}\, \di y \\
 &\notag \le  C_d { \norm{\phi}_{\L^2(\Omega)}}\pa{1+  N^{d-2}\pa{\tfrac 43 }^{N+1}},
\end{align}
where we used $\abs{x}\le \frac 12$.  Since 
$\abs{x} > \frac 1 4$ if $y \in {B}(0,2\abs{x})\setminus  \overline{B}(0,\tfrac 12)$, we obtain
\begin{align}\label{estpsi5}
& \int_{\pa{\Omega^\pr\cap  {B}(0,2\abs{x})} \setminus  \overline{B}(0,\tfrac 12) }  \abs{\phi(y)} \abs{ \bPhi_{y,>N}(x)}\, \di y \\
&  \hskip120pt \le C_d { \norm{\phi}_{\L^2(\Omega)}}\pa{1+  N^{d-2}\pa{\tfrac {16} 3 }^{N+1}}\abs{x}^{N+1}.\notag
 \end{align}
Putting together \eq{estpsi1}, \eq{estpsi2}, \eq{estpsi3}, \eq{estpsi4}, and \eq{estpsi5}, we conclude that  for all $ x \in  \overline{B}(0,\tfrac 12)$ we have  ($C_N\ge 1$)
 \begin{align}\label{estpsiN}
\abs{\psi(x)- \psi_{N}(x)} &\le C_d    C_N {W_{\infty}}(N+1)^{d-2}\pa{\tfrac {16} 3 }^{N+1} \abs{x}^{N+1}   { \norm{\phi}_{\L^2(\Omega)}}.
 \end{align}

We now define $Y_{N} \phi = h_N + \psi_N\in \H_N\up{d} $.  Since $\phi=h + \psi$, for all  $x \in  \overline{B}\pa{0,\tfrac 12}$ it follows from \eq{esth}, \eq{estpsiN}, and \eq{CN},  that 
\begin{align}\label{phiYN1}
& \abs{\phi(x) - \pa{Y_N \phi}(x)} \le \abs{h(x) -h_N(x)} + \abs{\psi(x)- \psi_{N}(x)}\\
 \notag &  \qquad  \le \pa{C_{d,{W}_\infty} +C_d     {W_{\infty}C_N}}(N+1)^{d-2}\pa{\tfrac {16} 3 }^{N+1} \abs{x}^{N+1}   { \norm{\phi}_{\L^2(\Omega)}}
 \\
 \notag &  \qquad  \le \widetilde{C}_{d,{W}_\infty} C_N (N+1)^{d-2}\pa{\tfrac {16} 3 }^{N+1} \abs{x}^{N+1}   { \norm{\phi}_{\L^2(\Omega)}}\\
  \notag &  \qquad  \le \widetilde{C}_{d,{W}_\infty}
  \pa{\widetilde{C}_{d,{W}_{\infty}}^{N}\pa{\tfrac {16} 3 }^{\frac {N(N+1)}2} \pa{N!}^{d-2}} 
  (N+1)^{d-2}\pa{\tfrac {16} 3 }^{N+1} \abs{x}^{N+1}   { \norm{\phi}_{\L^2(\Omega)}}\\
   \notag &  \qquad  \le    \widetilde{C}_{d,{W}_\infty}^{N+1}\pa{\tfrac {16} 3 }^{\frac {(N+1)(N+2)}2} ((N+1)!)^{d-2} \abs{x}^{N+1}   { \norm{\phi}_{\L^2(\Omega)}}, 
   \end{align}
by choosing the constant $ \widetilde{C}_{d,{W}_\infty}$ in \eq{CN}  large enough.  This completes the induction.
 
The lemma is proven, as  \eq{dimE} is an immediate consequence of \eq{phiYN}, and \eq{dimJ} follows from \eq{dimE}  and \eq{gammadN}.
 \end{proof}

Theorem~\ref{lem0psi0} is an immediate consequence   from the following lemma.

\begin{lemma}\label{lem0psi}   Let $ \Omega\subset \R^d$ be an open subset,  where $d=2,3,\ldots$, and fix a real valued function $ W\in \L^\infty(\Omega)$.   Let $B(x_0, r_1)\subset \Omega$ for some $x_0\in \R^d$ and $r_1>0$.  Suppose  $\cF $ is  a linear subspace of $H^2(\Omega)$ such that  
\beq\label{Feps}
\norm{\pa{-\Delta + W}\psi}_{\L^\infty(B(x_0, r_1))} \le  C_\cF  \norm{\psi}_{\L^2(\Omega)}\qtx{for all} \psi \in \cF .
\eeq
Then there exists   $0< r_2=r_2 \pa{d,W_\infty} < r_1$, where $W_\infty=\norm{W}_{\L^\infty(\Omega)}$, with the property that   for all $r \in ]0,r_2]$  there is a linear map  $  Z_r \colon \cF \to \cE_0(B(x_0,r))$ such that
\beq\label{Zbound}
\norm{\psi- Z_r\psi}_{\L^\infty(B(x_0,r))} \le   C_{d,r} C_\cF  \norm{\psi}_{\L^2(\Omega)}\qtx{where} \lim_{r \to 0}  C_{d,r}= 0.
 \eeq
As a consequence, for all $N\in \N$ there is a vector subspace $\cF_N$ of  $\cF$, with
\beq
\dim \cF_N \ge \dim \cF  - \gamma_d N^{d-1},
\eeq
where $\gamma_d$ is the constant in \eq{gammadN}, such that for all $\psi \in \cF_N$ we have
\begin{align}\label{phiYN2}
 \abs{\psi(x)}& \le \pa{\widehat{C}_{d,{W}_\infty,r_1}^{N+1} ((N+1)!)^{d-2}\,  3^{N^2} \abs{x-x_0}^{N+1} + C_\cF}\norm{\psi}_{\L^2(\Omega)}\\
 & \notag \le \pa{{C}_{d,{W}_\infty,r_1}^{N^2} \abs{x-x_0}^{N+1} + C_\cF}\norm{\psi}_{\L^2(\Omega)}
  \end{align}
  for all $x \in \overline{B}(x_0, \tfrac {r_2} 4)$.
\end{lemma}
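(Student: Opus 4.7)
The plan is to construct $Z_r \psi$ by subtracting from $\psi$ the solution of an inhomogeneous Dirichlet problem on $B(x_0,r)$, and then invoke Lemma~\ref{lemBers} for the Taylor-type expansion of the corrected function.

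\emph{Step 1 (construction of $Z_r$).}  Choose $r_2 = r_2(d,W_\infty) < \tfrac{2r_1}{3}$ small enough that $(3r_2/2)^2 W_\infty$ is much smaller than the first Dirichlet eigenvalue of $-\Delta$ on a ball of radius $3r_2/2$; for every $r\in (0,3r_2/2]$ the operator $-\Delta+W$ with Dirichlet boundary condition on $B(x_0,r)$ is then invertible.  Given $\psi\in\cF$, set $f:=(-\Delta+W)\psi$ on $B(x_0,r_1)$, so that $\|f\|_\infty \le C_\cF \|\psi\|_{\L^2(\Omega)}$ by \eqref{Feps}.  Let $v\in H^1_0(B(x_0,r))\cap H^2(B(x_0,r))$ be the unique solution to $(-\Delta+W)v=f$ with zero Dirichlet data, and set $Z_r\psi:=\psi - v$.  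Then $(-\Delta+W)(Z_r\psi)=0$ on $B(x_0,r)$, so $Z_r\psi\in \cE_0(B(x_0,r))$, and $\psi\mapsto Z_r\psi$ is linear.

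\emph{Step 2 (perturbation estimate).}  Write the equation for $v$ as the fixed-point identity $v = (-\Delta)^{-1}_{D,B(x_0,r)}(f - Wv)$.  Using $\|(-\Delta)^{-1}_{D,B(x_0,r)}\|_{\L^\infty\to \L^\infty}\le c_d r^2$ (obtained by bounding the Dirichlet Green's function by the Newtonian kernel $\bPhi$), a Neumann-series argument valid for $r^2 W_\infty \ll 1$ yields
\begin{equation*}
\|v\|_{\L^\infty(B(x_0,r))} \le C_d\, r^2\, \|f\|_\infty \le C_d\, r^2\, C_\cF\, \|\psi\|_{\L^2(\Omega)},
\end{equation*}
which is \eqref{Zbound} with $C_{d,r} := C_d\, r^2 \to 0$ as $r\to 0$.

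\emph{Step 3 (quotient / dimension count).}  Fix $r = 3r_2/2$.  Setting $\cF_N := Z_r^{-1}\bigl(\cE_{N+1}(B(x_0,r))\bigr)$, the linear map $\cF \to Z_r(\cF)/\bigl(Z_r(\cF)\cap \cE_{N+1}(B(x_0,r))\bigr)$ induced by $Z_r$ has kernel $\cF_N$.  Applying \eqref{dimJ} to the subspace $\cJ := Z_r(\cF)\subset \cE_0(B(x_0,r))$ bounds the image dimension by $\gamma_d N^{d-1}$, hence $\dim\cF_N \ge \dim\cF - \gamma_d N^{d-1}$.

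\emph{Step 4 (Taylor bound).}  For $\psi\in\cF_N$, apply Lemma~\ref{lemBers} to $Z_r\psi\in \cE_{N+1}(B(x_0,r))$, taking $r_0=r_2/2$ in its statement so that $3r_0 = r$ and estimates hold on $\overline{B}(x_0,r_2/4)$.  Since $Y_N^{B(x_0,r)}(Z_r\psi)=0$, \eqref{phiYN} gives
\begin{equation*}
|Z_r\psi(x)|\le (r_2/2)^{-\frac d 2}\bigl((r_2/2)^{-1}\widetilde C_{d,(r_2/2)^2 W_\infty}\bigr)^{N+1}\bigl(\tfrac{16}{3}\bigr)^{\frac{(N+1)(N+2)}{2}}((N+1)!)^{d-2}|x-x_0|^{N+1}\|Z_r\psi\|_{\L^2}.
\end{equation*}
Since Step~2 gives $\|Z_r\psi\|_{\L^2(B(x_0,r))} \le (1+o(1))\|\psi\|_{\L^2(\Omega)}$, combining with $|\psi(x)|\le |Z_r\psi(x)|+\|v\|_\infty$ and the bound on $\|v\|_\infty$ yields the first line of \eqref{phiYN2}; for the second line note that $(16/3)^{(N+1)(N+2)/2}\le 3^{N^2}$ times factors of the form $C^N$, and absorb $((N+1)!)^{d-2}$ and $\widehat C^{N+1}$ into $C_{d,W_\infty,r_1}^{N^2}$.

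The main obstacle is two-fold: carefully choosing $r_2$ so that the Dirichlet inverse exists and has the claimed $\L^\infty$ bound uniformly in the perturbation $W$ (which forces $r_2\lesssim W_\infty^{-1/2}$), and bookkeeping the combinatorial factors from Lemma~\ref{lemBers} so that the non-polynomial growth in $N$ is collapsed into the compact form $C_{d,W_\infty,r_1}^{N^2}$ in the final inequality.
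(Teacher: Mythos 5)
Your proposal is correct and follows essentially the same route as the paper: the paper defines $Z_r\psi$ directly as the solution of $(-\Delta+W)\phi=0$ with boundary data $\psi$ and bounds $\psi-Z_r\psi$ via the Green's representation formula, which is the same object and the same perturbation estimate as your zero-boundary corrector $v$ with the Neumann-series bound, and your Steps 3--4 (pulling back $\cE_{N+1}$ under $Z_r$, counting dimensions via \eqref{dimJ}, and invoking \eqref{phiYN} with $3r_0=r$) coincide with the paper's argument. The only caveat, present in the paper's proof as well, is that $\norm{Z_r\psi}_{\L^2}$ picks up a factor $1+C\,C_\cF$ rather than $1+o(1)$, which is harmless in the intended application where $C_\cF$ is bounded.
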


\begin{proof} It suffices to consider $x_0=0$.  We set $B_r= B(0,r)$.
Given $0<r<r_1$ and  $\psi \in \cF$, we define $Z_r \psi \in \cE_0(B_r)$ as the unique solution $\phi\in {H}^2(B_r)$ to the Dirichlet problem on $B_r$ given by
\beq
\begin{cases}  -\Delta \phi + W \phi =0 \qtx{on}  B_r \\
\phi = \psi \qtx{on} \partial B_r
\end{cases}.
\eeq
This map is well defined in view of \cite[Theorem~8.3]{GT} and  is clearly a linear map.

To prove \eq{Zbound} we will use the Green's function $G_r(x,y)$ for the ball $B_r$. We recall that, abusing the notation by writing $\bPhi(\abs{x})$ instead of $\bPhi(x)$ (see \cite[Section~2.5]{GT}; note that with our definition  $\bPhi(x)= -\Gamma(\abs{x})$),
\beq
G_r(x,y)= \begin{cases}
\bPhi(\abs{x-y}) - \bPhi\pa{\tfrac{\abs{y}} r \abs{x - \tfrac {r^2}{\abs{y}^2} y}} & \text{if}\quad y \not= 0\\
\bPhi(\abs{x}) - \bPhi(r)& \text{if}\quad y = 0
\end{cases}.
\eeq 
Using Green's representation formula  \cite[Eq.~(2.21)]{GT} for $\psi$ and $Z_r\psi$, for all $ x\in B_r$ we have
\begin{align}
\psi(x)  & =   - \int_{\partial B_r} \psi(\zeta) \partial_\nu G_r(x,\zeta) \di S(\zeta) - \int_{B_r}W(y) \psi(y)G_r(x,y) \di y\\
& \notag \hskip100pt  +  \int_{B_r}(-\Delta + W(y)) \psi(y)G_r(x,y) \di y,\\
Z_r \psi(x)  & =   - \int_{\partial B_r} \psi(\zeta) \partial_\nu G_r(x,\zeta) \di S(\zeta) - \int_{B_r}W(y) Z_r\psi(y)G_r(x,y) \di y,
\end{align}
where $\di S$ denotes the surface measure and $\partial_\nu$ is the normal derivative.  Since by an explicit calculation  we have, with $p_2=2$  and $p_d= \frac{d-1}{d-2}$for $d \ge 3$,   that for all $x \in B_r$
\beq
 \norm{G_r(x,\cdot)}_{\L^1(B_{r})}\le C^\pr_d r^{\frac {d(p_d-1)}{p_d}}\norm{G_r(x,\cdot)}_{\L^{p_d}(B_{r})}\le C_d r^{\frac {d(p_d-1)}{p_d}},
\eeq
it follows that 
\begin{align}
&\norm{\psi- Z_r\psi}_{\L^\infty(B_{r})} \\
&\notag \qquad \le C_d r^{\frac {d(p_d-1)}{p_d}}  \pa{ {W}_\infty\norm{\psi- Z_r\psi}_{\L^\infty(B_{r})} + \norm{(-\Delta + W) \psi}_{\L^\infty(B_{r})} }.
\end{align}
Selecting $r_2 \in ]0,r_1[$ such that $C_d r_2^{\frac {d(p_d-1)}{p_d}}(1 + W_\infty) \le \frac 12$, and using \eq{Feps}, we get  \eq{Zbound}.

 Now let  $\cJ = \Ran Z_{r_2}$, a linear subspace  of $ \cE_0(B_{r_2})$; note that 
 \beq  \label{dimJ2}
 \dim \cJ +  \dim \ker  Z_{r_2} =\dim \cF.
 \eeq
We set   $\cJ_N=\cJ  \cap \cE_{N+1}(B_{r_2})$ and $\cF_N =  Z_{r_2}^{-1}(\cJ_N)$. It follows from \eq{dimJ}  and  \eq{dimJ2} that
\beq
\dim \cF_N  =  \dim \ker  Z_{r_2} + \dim \cJ_N \ge \dim \cF - \gamma_d N^{d-1}.
\eeq
If  $\psi \in \cF_N $, we have $ Z_{r_2} \psi \in \cE_{N+1}(B_{r_2})$ and
\beq\label{Zbound1}
\norm{\psi}_{\L^\infty(B_{r_2})} \le \norm{\psi- Z_{r_2} \psi}_{\L^\infty(B_{r_2})} + \norm{Z_{r_2} \psi}_{\L^\infty(B_{r_2})},
\eeq
so \eq{phiYN2} follows from \eq{Zbound} and \eq{phiYN}.
\end{proof}

 \subsection{A quantitative unique continuation principle for approximate solutions of the stationary Schr\" odinger equation}\label{subsecUCP}

We state and prove a  a version of 
 Bourgain and Kenig's quantitative unique continuation principle  \cite[Lemma~3.10]{BK}, in which we make   explicit the dependence on the  parameters relevant to this article.  We give a proof following  \cite[Theorem~A.1]{GKloc}.  

 Given subsets $A$ and $B$ of $\R^{d}$, and  a  function $\vphi$ on   set $B$,  we set $\vphi_{A}:=\vphi \Chi_{A\cap B}$.  In particular, given $x\in   \R^{d}$ and $\delta >0$ we write $\vphi_{x,\delta}: =\vphi_{B(x,\delta )}$.

\begin{theorem}\label{thmucp} Let  ${\Omega}$ be an  open subset  of $\R^d$ and consider    a real measurable function $V$ on ${\Omega}$ with $\norm{V}_{\infty} \le K <\infty$.   Let 
$\psi \in\mathrm{H}^2({\Omega})$ be real valued and  let  ${\zeta} \in \L^2({\Omega})$  be defined by
\beq \label{eq}
-\Delta {\psi} +V{\psi}={\zeta}  \qtx{a.e.\  on} \Omega.
\eeq
 Let   ${\Theta} \subset {\Omega}$  be a bounded measurable set where $\norm{\psi_{\Theta}}_2 >0$. 
Set
\beq \label{defRx0}
{Q}(x,\Theta):= \sup_{y \in \Theta } \abs{y - x} \qtx{for} x \in {\Omega}.
\eeq
Consider $x_0 \in {\Omega}\setminus \overline{\Theta}$ such that
\beq
  \label{xR}
{Q}={Q}(x_0,\Theta)\ge  1 \qtx{and} B(x_0, 6{Q}+ 2)\subset {\Omega}.
\eeq
Then, given
\beq \label{delta}
0<  \delta \le \min\set{   \dist \pa{x_0, {\Theta}},\tfrac 1 {24}},
\eeq
we have
\begin{align} \label{UCPbound}
 \pa{\frac \delta{Q}}^{m \pa{1 + K^{\frac 2 3}}\pa{ {Q}^{\frac 43}  +  \log \frac{\norm{ {\psi}_{{\Omega}}}_{2}} {\norm{ {\psi}_{{\Theta}}}_2}}}\norm{ {\psi}_{{\Theta}}}^2_2  \le   \norm{ {\psi}_{x_0,\delta}}^2_2 + \delta^2 \norm{{\zeta_{{\Omega}}}}_2^2,
\end{align}
where $m>0$ is a constant depending only on $d$.
\end{theorem}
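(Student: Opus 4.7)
The plan is to follow the Bourgain--Kenig strategy from \cite{BK}, which hinges on a Carleman estimate with a carefully chosen radial weight, combined with a three-region bookkeeping that propagates $\L^2$ smallness from $B(x_0,\delta)$ outward to $\Theta$. First, I would fix a convex function $\phi$ and work with weights of the form $w_\alpha(x)=\abs{x-x_0}^{-2\alpha}\exp(-2\alpha\,\phi(\abs{x-x_0}))$, for which Bourgain and Kenig established a Carleman inequality
\[
\alpha^{3}\!\int w_\alpha\,\abs{u}^{2} + \alpha\!\int w_\alpha\,\abs{\nabla u}^{2} \le C\!\int w_\alpha\,\abs{\Delta u}^{2}
\]
valid for smooth $u$ compactly supported in an annulus $B(x_0,R)\setminus B(x_0,\delta/2)$, provided $\alpha \ge \alpha_{0}(1+K^{2/3})$. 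This threshold is precisely what allows the potential term $\alpha^{3}\!\int w_\alpha V^{2}\abs{\psi}^{2}$ to be absorbed into the left-hand side once we substitute $u=\chi\psi$ for a suitable cutoff.

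Second, I would choose $\chi$ equal to $1$ on $B(x_0,R)\setminus B(x_0,\delta)$ and vanishing on $B(x_0,\delta/2)$ and outside $B(x_0,R+1)$, with $R=3Q+1$ so that $\Theta\subset B(x_0,R-1)$. Applying the Carleman estimate to $\chi\psi$ and using \eq{eq} to rewrite $\Delta(\chi\psi) = \chi(V\psi-\zeta)+[\Delta,\chi]\psi$, the contribution from $V$ is swallowed by the left-hand side. The commutator $[\Delta,\chi]\psi$ is supported in two thin shells, one near $\partial B(x_0,\delta)$ and one near $\partial B(x_0,R+1)$, producing, after pointwise bounds on $w_\alpha$ on $\Theta$, on the inner shell, and on the outer shell, a three-term inequality of the schematic form
\[
w_\alpha(Q)\norm{\psi_\Theta}_2^{2} \le w_\alpha(\delta)\bigl(\norm{\psi_{x_0,\delta}}_2^{2}+\delta^{2}\norm{\zeta_\Omega}_2^{2}\bigr) + w_\alpha(R+1)\norm{\psi_\Omega}_2^{2}.
\]

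Third, I would optimize in $\alpha$, picking it just large enough that the far-shell term is dominated by $\tfrac{1}{2}\norm{\psi_\Theta}_2^{2}$. Since the weight $w_\alpha$ decays like $e^{-\alpha g(r)}$ with $g(r)\sim r^{4/3}$ for Bourgain--Kenig's choice of $\phi$, this forces
\[
\alpha \sim \pa{1+K^{2/3}}\!\pa{Q^{4/3}+\log\tfrac{\norm{\psi_\Omega}_2}{\norm{\psi_\Theta}_2}}.
\]
Feeding this back into the near-shell term, where $w_\alpha(\delta)/w_\alpha(Q)\sim (\delta/Q)^{c\alpha}$, then yields \eq{UCPbound}.

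The main obstacle is to track constants carefully enough that the dependence on $K$ and the dependence on $Q$ separate in the stated multiplicative form. The $K^{2/3}$ factor comes from the minimal $\alpha$ at which the Carleman estimate absorbs $\norm{V}_\infty$, while the $Q^{4/3}$ exponent is dictated by the sharp growth of the optimal radial weight over the range $[\delta,Q]$; both scales must appear \emph{multiplicatively} in the final $\alpha$ and hence in the exponent of $\delta/Q$. A careful reread of \cite[Appendix~A]{GKloc} with this bookkeeping in mind is what would deliver \eq{UCPbound} in the precise quantitative form stated.
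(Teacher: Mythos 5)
Your overall route is the same as the paper's: the Escauriaza--Vessella/Bourgain--Kenig Carleman inequality with the radial weight, applied to a cutoff of $\psi$ supported in the annulus between $B(x_0,\delta/4)$ and a ball of radius $\sim Q$, followed by absorption of the potential term, a three-region estimate (target set $\Theta$, inner shell, outer shell), and a choice of $\alpha$ that kills the outer-shell term. The skeleton, the final formula for $\alpha$, and the resulting bound \eq{UCPbound} all match the paper's proof (which follows \cite[Theorem~A.1]{GKloc}).

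There is, however, one genuine error in your account of the mechanism, and it is exactly at the step that produces the exponent $Q^{4/3}$. The Bourgain--Kenig weight is $w_\vrho(x)=\vphi(|x|/\vrho)$ with $\vphi(s)\asymp s$ on $[0,1]$, so $w_\vrho^{-1-2\alpha}\asymp (|x|/\vrho)^{-1-2\alpha}$ is a pure power law; it does \emph{not} decay like $\e^{-\alpha r^{4/3}}$, and no $r^{4/3}$ growth of the weight over $[\delta,Q]$ is involved. (Indeed, if the weight really behaved like $\e^{-\alpha r^{4/3}}$, your own inner-shell ratio $w_\alpha(\delta)/w_\alpha(Q)\sim(\delta/Q)^{c\alpha}$ would be false, and the optimization would not produce \eq{UCPbound}.) The correct source of $Q^{4/3}$ is the scaling factor $\vrho^4$ on the right-hand side of the Carleman inequality on the ball of radius $\vrho\sim Q$: after substituting $\Delta\psi=V\psi-\zeta$ and using $(\Delta\psi)^2\le 2K^2\psi^2+2\zeta^2$ together with $w_\vrho\le 1$, absorbing the $K^2$ term into the left-hand side requires $\alpha^3/\pa{3C_3\vrho^4}\ge 6K^2$, i.e.\ $\alpha\gtrsim \vrho^{4/3}K^{2/3}\sim Q^{4/3}K^{2/3}$. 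Thus $Q^{4/3}$ and $K^{2/3}$ both arise from this single absorption step and arrive already multiplied; the ``main obstacle'' you describe of forcing them to appear multiplicatively is not an obstacle at all, and the Carleman estimate itself only needs $\alpha\ge C_2(d)$, with no $K$-dependence in its hypothesis. The outer shell contributes the separate requirement $\alpha\gtrsim\log\pa{\norm{\psi_\Omega}_2/\norm{\psi_\Theta}_2}$ because there the weight ratio is only $(\mathrm{const})^{2\alpha}$. With these corrections your sketch becomes the paper's proof.
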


We  we will apply this  theorem with   $\delta \ll 1 \ll{Q}$.

The proof of this theorem is based on the  the  Carleman-type inequality estimate given in  \cite[Lemma~3.15]{BK},   \cite[Theorem~2]{EV},  We state it as in \cite[Lemma~A.5]{GKloc}.

\begin{lemma}\label{lemCarleman}  Given  $\vrho >0$, the function $w_\vrho (x)=\vphi(\frac 1 \vrho \abs{x})$ on $\R^d$, where  $ \vphi(s):= s\,  \e^{-\int_0^s \frac {1 - \e^{-t}} t \, \di t}  $, is a strictly increasing continuous function on $[0,\infty[$, $C^\infty$ on $]0,\infty[$, satisfying 
 \beq\label{wvrho}
\tfrac 1 {C_1 \vrho} \abs{x} \le w_\vrho(x) \le \tfrac 1 { \vrho}\abs{x}   \qtx{for}  x \in B(0,\vrho), \qtx{where} C_1={\vphi(1)}^{-1} \in ]2,3[.
\eeq
Moreover,  there exist positive constants  $C_2$ and $C_3$, depending only on $d$, such that
for all   $\alpha \ge C_2$ and  all real valued functions  $f\in \mathrm{H}^2( B(0,\vrho))$ with $\supp f \subset B(0,\vrho)\setminus \set{0}$ we have
\begin{equation} \label{carlvrho}
\alpha^3 \int_{\R^d}w_\vrho^{-1-2\alpha} f^2  \, \di x \le C_3\,  \vrho^4 \int_{\R^d}  w_\vrho^{2-2\alpha} (\Delta f)^2  \, \di x.
\end{equation}
\end{lemma}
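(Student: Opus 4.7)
\medskip

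\noindent\textbf{Proof plan.}

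I would begin by verifying the analytic properties of $\vphi$ and the two-sided bounds on $w_\vrho$. Set $g(s)=\int_0^s \frac{1-\e^{-t}}{t}\,\di t$; the integrand extends continuously to $t=0$ with value $1$, so $g$ is well-defined, $C^\infty$ on $[0,\infty)$, and strictly increasing with $g(0)=0$. A direct computation gives
\beq\label{prop:phiprime}
\vphi'(s)=\e^{-g(s)}\bigl(1-s\, g'(s)\bigr)=\e^{-g(s)}\e^{-s}>0,
\eeq
so $\vphi$ is strictly increasing on $[0,\infty)$, smooth on $(0,\infty)$, and continuous at $0$ with $\vphi(0)=0$. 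Since $g\ge 0$, one has $\vphi(s)\le s$; and since $g$ is increasing, $\vphi(s)/s=\e^{-g(s)}\ge \e^{-g(1)}=\vphi(1)$ for $s\in[0,1]$. Writing $C_1=1/\vphi(1)$, these inequalities translate via $w_\vrho(x)=\vphi(|x|/\vrho)$ into \eq{wvrho}. The bound $C_1\in(2,3)$ amounts to $g(1)\in(\log 2,\log 3)$, which I would obtain by elementary quadrature of the convergent series $\frac{1-\e^{-t}}{t}=\sum_{k\ge 0}\frac{(-t)^k}{(k+1)!}$.

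Next I would reduce the Carleman inequality \eq{carlvrho} to the case $\vrho=1$. Substituting $x=\vrho y$ and defining $\tilde f(y):=f(\vrho y)$ gives $w_\vrho(x)=w_1(y)$ and $(\Delta_x f)(\vrho y)=\vrho^{-2}\Delta_y \tilde f(y)$, so the two sides of \eq{carlvrho} scale as $\vrho^d$ and $\vrho^d\cdot\vrho^{-4}\cdot\vrho^4=\vrho^d$ respectively. This explains the $\vrho^4$ prefactor, and reduces matters to proving \eq{carlvrho} with $\vrho=1$, $w=w_1=\vphi(|\cdot|)$, for $f\in C_c^\infty(B(0,1)\setminus\{0\})$ (by density in $\mathrm{H}^2$).

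The core estimate I would prove as follows. Introduce the change of unknown $u=w^{-1/2-\alpha}f$ (so that $w^{-1-2\alpha}f^2=u^2$) and the conjugated operator
\beq
L_\alpha u := w^{1-\alpha}\Delta\bigl(w^{1/2+\alpha}\, w^{-1/2}\cdot w^{1/2}u\bigr),
\eeq
where the specific conjugation is arranged so that the desired inequality reads $\alpha^3\|u\|_2^2\le C_3\|L_\alpha u\|_2^2$. Pass to polar coordinates $x=r\omega$, $\omega\in S^{d-1}$, and introduce the Liouville variable $\tau$ defined by $\di\tau=\frac{\di r}{r}$ together with the radial change $v(\tau,\omega)=r^{(d-1)/2}u(r\omega)$. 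Using $\Delta=\partial_r^2+\frac{d-1}{r}\partial_r+r^{-2}\Delta_{S^{d-1}}$, the operator $L_\alpha$ becomes, modulo harmless zero-order terms, a second-order operator of the form $S+A$ with $S$ symmetric and $A$ antisymmetric on $\L^2(\di\tau\,\di\omega)$, whose coefficients are explicit functions of $r(\tau)$ and of $r(\log w)'(r)=\e^{-r}$ (the identity obtained from \eq{prop:phiprime}, since $(\log\vphi)'(r)=\vphi'/\vphi=\e^{-r}/r$). The inequality $\|(S+A)v\|^2\ge\langle [S,A]v,v\rangle$ (an integration by parts, valid because $v\in C_c^\infty$ away from the singularities) reduces the Carleman estimate to the positivity bound
\beq
\bigl\langle [S,A]v,\, v\bigr\rangle \ge c\,\alpha^3\|v\|_2^2-\text{(absorbable remainder)},
\eeq
for all $\alpha\ge C_2(d)$.

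The technical heart of the argument, and the main obstacle, is this commutator positivity. Writing out $[S,A]$ produces terms proportional to $\alpha^3$, $\alpha$ times the angular Laplacian $-\Delta_{S^{d-1}}$, and remainder terms involving derivatives of $\log w$. The particular choice of $\vphi$ is engineered precisely so that $r(\log w)'(r)=\e^{-r}\in[0,1]$ is bounded and has a strictly positive, controlled derivative; this is exactly what is needed for the explicit principal symbol of $[S,A]$ to be coercive for large $\alpha$, with the spherical term contributing nonnegatively by $-\Delta_{S^{d-1}}\ge 0$. I would finish by checking that the remainder is bounded by $O(\alpha^2)\|v\|^2+O(\alpha)\|\nabla v\|^2$, which is absorbed into the main $\alpha^3$ term for $\alpha\ge C_2(d)$ large; reverting $v\leftrightarrow u\leftrightarrow f$ then gives \eq{carlvrho} with some constant $C_3=C_3(d)$.
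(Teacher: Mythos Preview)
The paper does not give its own proof of this lemma: it is stated as a quotation of the Carleman estimate from \cite[Lemma~3.15]{BK} and \cite[Theorem~2]{EV}, in the formulation of \cite[Lemma~A.5]{GKloc}. So there is no in-paper argument to compare against.

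Your outline is the standard route taken in those references. The elementary part (properties of $\vphi$, the two-sided bound \eq{wvrho}, the numerical check $g(1)\in(\log 2,\log 3)$, and the scaling reduction to $\vrho=1$) is correct as written; in particular your identity $\vphi'(s)=\e^{-g(s)}\e^{-s}$ and hence $(\log\vphi)'(s)=\e^{-s}/s$ is exactly the feature of the weight that drives the estimate. The conjugation/commutator scheme $\norm{(S+A)v}^2\ge\langle[S,A]v,v\rangle$ after passing to the logarithmic radial variable is precisely the Escauriaza--Vessella mechanism. The only place your plan is genuinely schematic is the commutator positivity step: to close the argument you must actually compute $[S,A]$ and verify that its leading $\alpha^3$ coefficient is bounded below by a positive multiple of the derivative of $s\mapsto s(\log\vphi)'(s)=\e^{-s}$ (which is where the specific choice of $\vphi$ enters), while the angular contribution pairs with $-\Delta_{S^{d-1}}\ge 0$ with the correct sign, and the lower-order remainders are $O(\alpha^2)$. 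That computation is routine but not automatic; if you carry it out you recover exactly the cited result, so your plan is sound.
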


\begin{proof}[Proof of Theorem~\ref{thmucp}] 
Let $x_0 \in {\Omega}\setminus \overline{\Theta}$ satisfy \eq{xR}, where $C_1$ is defined in \eq{wvrho}.  For convenience we may assume $x_0=0$, in which case $  {\Theta}\subset B(0, 2C_1{Q})$,  and  take ${\Omega}= B(0,  \vrho )$, where   $\vrho= 2 C_1{Q}  + 2$.

Let  $\delta$ be as in \eq{delta}, and fix a function $\eta \in C^\infty_{\mathrm{c}}(\R^d)$ given by $\eta(x)= \xi(\abs{x})$, where $\xi $  is an even  $C^\infty$ function  on $\R$, $0\le \xi\le 1$, such that
\begin{gather}
\xi(s) = 1 \; \text{if}  \;  \tfrac {3\delta} 4 \le \abs{s} \le   2C_1{Q}, \quad \xi(s) = 0  \; \text{if}  \;  \abs{s} \le \tfrac \delta 4  \; \text{or}  \;  \abs{s} \ge  2C_1{Q} +1,
\\
\abs{\xi^{(j)}(s)}  \le \pa{\tfrac 4 \delta}^{j} \; \text{if}  \;   \abs{s} \le\tfrac {3\delta} 4, \quad
\abs{\xi^{(j)}(s)}  \le 2^{j} \; \text{if}  \;  \abs{s} \ge  2C_1{Q} , \quad  j=1,2. \notag
\end{gather}
  Note that
$\abs{\nabla \eta (x)} \le  \sqrt{d}\abs{\xi^{\prime}(\abs{x})}$ and $\abs{\Delta \eta (x)} \le  d\abs{\xi^{\prime\prime}(\abs{x})}$.

We will now 
 apply Lemma~\ref{lemCarleman}  to the function $\eta \psi$. In what follows $C_1,C_2,C_3$ are the constants of Lemma~\ref{lemCarleman}, which  depend only on $d$.  By $C_j$, $j=4,5,\ldots$, we will always denote an appropriate nonzero constant depending only on $d$.

Given  $\alpha \ge C_2>1$ (without loss of generality we take  $C_2 >1$),   it follows from \eq{carlvrho} that
\begin{align} \notag
&\frac {\alpha^3}{3 C_3 \vrho^4} \int_{\R^d} w_\vrho^{-1-2\alpha} \eta^2 {\psi}^2  \, \di x \le \tfrac 1 3  \int_{\R^d}  w_\vrho^{2-2\alpha} (\Delta (\eta  {\psi}))^2  \, \di x
\le \int_{\R^d}  w_\vrho^{2-2\alpha}\eta^2 (\Delta  {\psi})^2  \, \di x 
\\ & \hskip40pt  + 4 \int_{\supp \nabla \eta} w_\vrho^{2-2\alpha} \abs{\nabla \eta}^2 \abs{\nabla  {\psi}}^2  \, \di x  +   \int_{\supp \nabla \eta}  w_\vrho^{2-2\alpha} (\Delta \eta)^2  {\psi}^2  \, \di x,
\label{longest} 
\end{align}
where $\supp \nabla \eta \subset \set{\frac \delta 4 \le \abs{x} \le \frac {3\delta} 4}\cup \set{2C_1{Q} \le \abs{x} \le 2C_1{Q} +1}$.

Using \eq{eq}, recalling  $\norm{V}_{\infty} \le K$, and noting that $w_\vrho\le 1 $ on $\supp \eta$,   we get
\beq \label{longest0}
 \int_{\R^d} w_\vrho^{2-2\alpha}\eta^2 (\Delta  {\psi})^2 \, \di x 
 \le  2  K^2 \int_{\R^d} w_\vrho^{-1-2\alpha}\eta^2  {\psi}^2  \, \di x +2 \int_{\R^d}  w_\vrho^{2-2\alpha}\eta^2 {\zeta}^2 \, \di x.
 \eeq
We take
\beq
\alpha_{0}:={\alpha}\rho^{-\frac 4 3} \ge  C_4 \pa{1 + K^{\frac 23}}, \label{alpha0}
\eeq
ensuring   $\alpha > C_2$ and 
\beq \label{alpha01}
 \frac {\alpha^3}{3 C_3 \vrho^4} = \frac {\alpha_0^3}{3 C_3 } \ge  6 {K^2} .
\eeq
As a consequence, using \eq{wvrho} and recalling \eq{defRx0}, we obtain
\beq\label{longest2}
 \int_{\R^d}w_\vrho^{-1-2\alpha} \eta^2  {\psi}^2  \, \di x \ge \pa{\frac { \vrho}{Q}}^{1 + 2 \alpha}\norm{ {\psi}_{{\Theta}}}^2_2  \ge \pa{2C_1}^{1 + 2 \alpha}\norm{ {\psi}_{{\Theta}}}^2_2.
 \eeq
Combining \eq{longest}, \eq{longest0}, \eq{alpha01}, and \eq{longest2},
we  conclude that
\begin{align}
  & \frac {2\alpha_0^3}{9 C_3 }\pa{2C_1}^{1 + 2 \alpha}\norm{ {\psi}_{{\Theta}}}^2_2  \le   4 \int_{\supp \nabla \eta}  w_\vrho^{2-2\alpha} \abs{\nabla \eta}^2 \abs{\nabla  {\psi}}^2  \, \di x  \\ & \hskip20pt \qquad +   \int_{\supp \nabla \eta} w_\vrho^{2-2\alpha} (\Delta \eta)^2  {\psi}^2  \, \di x + 2  \int_{\supp  \eta}  w_\vrho^{2-2\alpha}\eta^2 {\zeta}^2 \, \di x. \notag
\end{align}

We have
 \begin{align}
&  \int_{ \set{{2C_1}{Q} \le \abs{x} \le {2C_1}{Q} +1}}   w_\vrho^{2-2\alpha}\pa{ 4 \abs{\nabla \eta}^2 \abs{\nabla  {\psi}}^2 +  (\Delta \eta)^2  {\psi}^2 } \, \di x \\\notag
&\hskip20pt  \le  4 d^2  \pa{\frac {C_1 \vrho}{{2C_1}{Q}}}^{2\alpha -2} \int_{ \set{{2C_1}{Q} \le \abs{x} \le {2C_1}{Q} +1}}  \pa{ 4  \abs{\nabla  {\psi}}^2 +    {\psi}^2 } \, \di x\\\notag
&\hskip20pt \le  C_5 \pa{\tfrac 5 4 {C_1 }}^{2\alpha -2} \int_{ \set{{2C_1}{Q} -1 \le \abs{x} \le {2C_1}{Q} +2}}  \pa{{\zeta}^2 + (1 +K )  {\psi}^2 } \, \di x\\ \notag
&\hskip20pt \le  C_5 \pa{\tfrac 5 4 {C_1 }}^{2\alpha -2}\pa{\norm{{\zeta_{{\Omega}}}}_2^2 +  (1 +K)\norm{ {\psi}_{{\Omega}}}_{2}^{2}},\notag
\end{align}
where we used an interior estimate (e.g., \cite[Lemma~A.2]{GKduke}).
Similarly,
 \begin{align}
&\int_{\set{\frac \delta 4 \le \abs{x} \le \frac {3\delta} 4}}  w_\vrho^{2-2\alpha}\pa{ 4 \abs{\nabla \eta}^2 \abs{\nabla  {\psi}}^2 +  (\Delta \eta)^2  {\psi}^2 } \, \di x\\ \notag
&  \qquad \qquad  \le  16 d^2 \delta^{-2}\pa{4 \delta^{-1} C_1 \vrho}^{2\alpha -2} \int_{\set{\frac \delta 4 \le \abs{x} \le  \frac {3\delta} 4}}  \pa{ 4  \abs{\nabla  {\psi}}^2 +    {\psi}^2 } \, \di x\\ \notag
& \qquad  \qquad  \le  C_{6}  \delta^{-2} \pa{4  \delta^{-1} C_1 \vrho}^{2\alpha -2} \int_{ \set{ \abs{x} \le  \delta }}    \pa{{\zeta}^2 + (K+ \delta^{-2})  {\psi}^2 } \, \di x \\
& \qquad  \qquad  \le    C_{6}  \delta^{-2}  \pa{16  \delta^{-1} C_1^2{Q}}^{2\alpha -2}\pa{\norm{{\zeta_{{\Omega}}}}_2^2 +  (K+ \delta^{-2})\norm{ {\psi}_{0,\delta}}^2_2}.\notag
\end{align}
In addition, 
\beq
 \int_{\supp  \eta} w_\vrho^{2-2\alpha}\eta^2 {\zeta}^2 \, \di x \le  \pa{4  \delta^{-1}  C_1 \vrho}^{2\alpha -2} \norm{{\zeta_{{\Omega}}}}_2^2\le  \pa{16 \delta^{-1} C_1^2{Q}}^{2\alpha -2}\norm{{\zeta_{{\Omega}}}}_2^2.
\eeq

Thus, if we have 
\begin{align}  \label{keycond}
\alpha_0^{3}\pa{\tfrac 85}^{2 \alpha}\norm{ {\psi}_{{\Theta}}}^2_2 \ge  C_{7} (1 +K)\norm{ {\psi}_{{\Omega}}}_{2}^{2} ,
\end{align}
we obtain
\begin{align}
 & C_5 \pa{\tfrac 5 4 {C_1 }}^{2\alpha -2}(1 +K)\norm{ {\psi}_{{\Omega}}}_{2}^{2} \le \tfrac 1 2  \frac {2\alpha_0^3}{9 C_3 }\pa{2C_1}^{1 + 2 \alpha}\norm{ {\psi}_{{\Theta}}}^2_2,
\end{align}
so we  conclude that
\begin{align} \label{almostbound}
 { \frac {\alpha_0^3}{9 C_3 }}\pa{2C_{1}}^{1 + 2 \alpha}\norm{ {\psi}_{{\Theta}}}^2_2  
  \le C_8  \delta^{-2}\pa{16 \delta^{-1} C_1^2{Q}}^{2\alpha -2}\pa{ (K+ \delta^{-2})\norm{ {\psi}_{0,\delta}}^2_2 + \norm{{\zeta_{{\Omega}}}}_2^2}, 
 \end{align}
 where we used  \eq{delta}. Thus,
\begin{align}
\alpha_0^3{Q}^2 \pa{\pa{8  C_1 {Q}}^{-1}\delta}^{2\alpha }\norm{ {\psi}_{{\Theta}}}^2_2 \le C_9 \pa{(K+ \delta^{-2})\norm{ {\psi}_{0,\delta}}^2_2 +\norm{{\zeta_{{\Omega}}}}_2^2},
\end{align} 
which implies 
\begin{align}\label{almostbound9}
\alpha_0^3 Q^4 \pa{\frac \delta Q}^{4\alpha +4}\norm{ {\psi}_{{\Theta}}}^2_2 \le C_{10} \pa{(1 + K)\norm{ {\psi}_{0,\delta}}^2_2 +\delta^2 \norm{{\zeta_{{\Omega}}}}_2^2},
\end{align}
since $\frac \delta Q \le  \frac 1 {24} < \frac 1 {8  C_1}$ by \eq{delta}.
 
 We now choose $\alpha$. Requiring \eq{alpha0}, to satisfy \eq{keycond}  it suffices to also require
  \beq
 \alpha   \ge C_{11} \pa{1+  \log \frac{\norm{ {\psi}_{{\Omega}}}_{2}} {\norm{ {\psi}_{{\Theta}}}_2}}.
  \eeq
 Thus we can satisfy \eq{alpha0} and \eq{keycond} by taking
\beq
\alpha = C_{12} \pa{1 + K^{\frac 2 3}}\pa{{Q}^{\frac 43}  +  \log \frac{\norm{ {\psi}_{{\Omega}}}_{2}} {\norm{ {\psi}_{{\Theta}}}_2}}. 
 \label{alpha1}
\eeq
Combining with   \eq{almostbound9}, and recalling $Q\ge 1$, we get 
 \begin{align}\label{almostbound967}
& \pa{1 + K^{\frac 2 3}}^3 \pa{\frac \delta Q}^{ C_{13} \pa{1 + K^{\frac 2 3}}\pa{ {Q}^{\frac 43}  +  \log \frac{\norm{ {\psi}_{{\Omega}}}_{2}} {\norm{ {\psi}_{{\Theta}}}_2}}}\norm{ {\psi}_{{\Theta}}}^2_2\\
\notag & \hskip140pt  \le C_{14} \pa{(1 +  K)\norm{ {\psi}_{0,\delta}}^2_2 + \delta^2\norm{{\zeta_{{\Omega}}}}_2^2},
\end{align}
and hence, 
\begin{align}\label{almostbound93454}
 \pa{\frac \delta{Q}}^{m \pa{1 + K^{\frac 2 3}}\pa{ {Q}^{\frac 43}  +  \log \frac{\norm{ {\psi}_{{\Omega}}}_{2}} {\norm{ {\psi}_{{\Theta}}}_2}}}\norm{ {\psi}_{{\Theta}}}^2_2  \le  \norm{ {\psi}_{0,\delta}}^2_2 + \delta^2  \norm{{\zeta_{{\Omega}}}}_2^2,
\end{align}
where $m>0$ is a constant depending only on $d$.
\end{proof}

We will apply Theorem~\ref{thmucp} to approximate eigenfunctions of  Schr\" odinger operators defined on a box $\La$ with Dirichlet boundary condition. In this case  the second condition in \eq{xR}  seems to restrict  the application of  Theorem~\ref{thmucp} to  sites $x_0\in \La$ sufficiently far away from the boundary of $\La$.  But, as noted in \cite[Corollary~A.2]{GKloc}, in this case Theorem~\ref{thmucp} can be extended to sites near the boundary of $\La$
as in the following corollary.

\begin{corollary}\label{corQUCPD}  Consider the Schr\"odinger operator $H_\Lambda:= -\Delta_\Lambda +V $ on $\mathrm{L}^2(\Lambda)$, where  $\Lambda= \Lambda_L(x_0)$ is  the open box of side $L>0$ centered at $x_0 \in \R^d$,  $\Delta_\Lambda$ is the Laplacian with either Dirichlet  or periodic boundary condition on $\Lambda$, and   $V$ a is bounded potential on $\Lambda$ with  $\|V\|_\infty \le K<\infty$. Let    $\psi \in \D(\Delta_\Lambda)$ and fix a   bounded measurable set  ${\Theta} \subset {\La}$ where $\norm{\psi_{\Theta}}_2 >0$.  Set
${Q}(x,\Theta):= \sup_{y \in \Theta } \abs{y - x}$ for $x \in \La$, and  consider  $x_0 \in {\La}\setminus \overline{\Theta}$ such that ${Q}={Q}(x_0,\Theta)\ge  1$.  Then, given
$0<  \delta \le \min\set{   \dist \pa{x_0, {\Theta}},\tfrac 1 {24}}$ such that $B(x_0,\delta)\subset \La$,
we have
\begin{align} \label{UCPboundD}
 \pa{\frac \delta{Q}}^{m \pa{1 + K^{\frac 2 3}}\pa{ {Q}^{\frac 43}  +  \log \frac{\norm{ {\psi}}_{2}} {\norm{ {\psi}_{{\Theta}}}_2}}}\norm{ {\psi}_{{\Theta}}}^2_2  \le   \norm{ {\psi}_{x_0,\delta}}^2_2 + \delta^2 \norm{H_\La \psi}_2^2,
\end{align}
where $m>0$ is a constant depending only on $d$.
\end{corollary}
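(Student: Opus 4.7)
The strategy is to reduce to Theorem~\ref{thmucp} by extending $\psi$ and $V$ from $\Lambda$ to an open set large enough to contain the ball $B(x_0, 6Q+2)$ required by the second condition in \eq{xR}. For Dirichlet boundary conditions, I would extend $\psi$ by successive odd reflections across the faces of $\Lambda$ and extend $V$ by the corresponding even reflections; since $\psi \in \mathcal{D}(\Delta^D_\Lambda) \subset H^2(\Lambda)\cap H^1_0(\Lambda)$, the resulting $\widetilde{\psi}$ lies in $H^2_{\mathrm{loc}}(\R^d)$, the even-reflection $\widetilde{V}$ satisfies $\|\widetilde{V}\|_\infty\le K$, and
\[
-\Delta \widetilde{\psi} + \widetilde{V}\widetilde{\psi} = \widetilde{\zeta} \quad\text{a.e.\ on } \R^d,
\]
where $\widetilde{\zeta}$ is the odd-reflection extension of $H_\Lambda\psi$ (the product of an odd and an even reflection is odd). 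For periodic boundary conditions, extend $\psi$, $V$ periodically; the same Schr\"odinger identity then holds weakly on $\R^d$.

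Next, I would fix an open box $\Omega$ containing $B(x_0, 6Q+2)$ and consisting of a finite union of translated/reflected copies of $\Lambda$. Since $Q\le \sqrt{d}\,L$, the ball $B(x_0,6Q+2)$ fits in a box of side $\le C'_d L$, so $\Omega$ can be chosen as a union of $C_d$ copies of $\Lambda$ with $C_d$ depending only on $d$. Because $\Theta\subset\Lambda$ and $B(x_0,\delta)\subset\Lambda$ by hypothesis, we have
\[
\|\widetilde{\psi}_{\Theta}\|_2=\|\psi_{\Theta}\|_2,\qquad \|\widetilde{\psi}_{x_0,\delta}\|_2=\|\psi_{x_0,\delta}\|_2,
\]
while $\|\widetilde{\psi}_{\Omega}\|_2\le \sqrt{C_d}\,\|\psi\|_2$ and $\|\widetilde{\zeta}_{\Omega}\|_2\le \sqrt{C_d}\,\|H_\Lambda\psi\|_2$ since each copy contributes a bounded multiple of the original norm.

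Applying Theorem~\ref{thmucp} to $\widetilde{\psi}$ on $\Omega$ with the same $x_0$, $\Theta$, and $\delta$ yields
\[
\pa{\tfrac{\delta}{Q}}^{m'(1+K^{2/3})\pa{Q^{4/3}+\log\tfrac{\|\widetilde{\psi}_{\Omega}\|_2}{\|\widetilde{\psi}_{\Theta}\|_2}}}\|\widetilde{\psi}_{\Theta}\|_2^2 \le \|\widetilde{\psi}_{x_0,\delta}\|_2^2 + \delta^2\|\widetilde{\zeta}_{\Omega}\|_2^2.
\]
Substituting the comparisons above, the extra additive $\log\sqrt{C_d}$ inside the left-hand exponent is absorbed into the constant $m$ (using $Q\ge 1$), and the multiplicative $C_d$ in front of $\delta^2\|H_\Lambda\psi\|_2^2$ on the right can likewise be absorbed into $m$ by slightly shrinking the base $\delta/Q<1$. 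This gives \eq{UCPboundD}.

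The main subtle point is to verify in the Dirichlet case that the odd-reflection extension of $\psi$ across a face of $\Lambda$ preserves $H^2$-regularity and satisfies the extended Schr\"odinger equation. This is a classical fact for functions in $H^2\cap H^1_0$ on a half-space and, proceeding face by face (using that boundary conditions are compatible across adjacent faces after the previous reflection), extends to the full box. Everything else in the argument is bookkeeping on norms and constants, which is why the theorem extends almost verbatim to the boundary-condition setting.
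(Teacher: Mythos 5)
Your proposal is correct and follows essentially the same route as the paper, which simply defers to \cite[Corollary~A.2]{GKloc}: that proof is exactly the extension of $\psi$ by odd reflections (Dirichlet) or periodicity, of $V$ by even reflections or periodicity, to a region containing $B(x_0,6Q+2)$, followed by an application of Theorem~\ref{thmucp} and absorption of the resulting $d$-dependent constants into $m$ using $Q\ge 1$ and $\delta/Q\le \tfrac1{24}$. Your handling of the $H^2$-regularity of the reflected function and of the norm bookkeeping is the right content for that argument.
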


This corollary is proved exactly as \cite[Corollary~A.2]{GKloc}.

\subsection{Two and three dimensional Schr\" odinger operators}\label{subsecd23}
\begin{theorem}\label{thmmaind23} Let $H$ be a Schr\" odinger operator as in \eq{schr}, where  $d=2,3$. 
Given $E_0 \in \R$, there exists  $L_{d,V_{\infty},E_0}$ such that for all $ 0<\eps \le \tfrac 1 2$,  open boxes $\Lambda=\Lambda_L$ with $L \ge {L_{d,V_{\infty},E_0}}{ \pa{\log \tfrac 1 \eps}^{ \frac {3}8} }$,    and energies $E\le E_0$, we have 
 \beq\label{logHolderd2}
\eta_\La \pa{[E,E + \eps]} \le \frac {C_{d,V_{\infty},E_0}} { \pa{\log \tfrac 1 \eps}^{ \frac {4-d}8} } .
\eeq 
\end{theorem}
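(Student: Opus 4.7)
The plan is to combine three tools: Theorem~\ref{lem0psi0}, which produces a finite-codimension subspace of approximate eigenfunctions obeying a polynomial local bound around any chosen point; Lemma~\ref{lemL2Linfty}, which selects a function in such a subspace with large supremum norm; and Corollary~\ref{corQUCPD}, the quantitative unique continuation principle, which propagates smallness.  Work with Dirichlet boundary condition, set $P=\Chi_{[E,E+\eps]}(H_\La)$ and $\rho=\eta_\La([E,E+\eps])=\abs\La^{-1}\tr P$.  Any $\psi\in\Ran P$ satisfies $\norm{(H_\La-E)\psi}_2\le\eps\norm\psi_2$, and since $\Ran P$ is spectrally localized in a bounded energy interval, elliptic regularity upgrades this to $\norm{(H_\La-E)\psi}_\infty\le C_{d,V_\infty,E_0}\eps\norm\psi_2$.

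Fix a point $x_0$ in the interior of $\La$ and apply Theorem~\ref{lem0psi0} with $W=V-E$, $\cF=\Ran P$, and $C_\cF=C_{d,V_\infty,E_0}\eps$, obtaining $\cF_N\subset\Ran P$ of codimension at most $\gamma_d N^{d-1}$ on which $\abs{\psi(x)}\le(C^{N^2}\abs{x-x_0}^{N+1}+C\eps)\norm\psi_2$ for $x\in B(x_0,r_1)$.  Assuming $\rho L^d\ge 2\gamma_d N^{d-1}$, Lemma~\ref{lemL2Linfty} applied to $\cF_N\subset\L^\infty(\La,\abs\La^{-1}\di x)$ produces $\psi_0\in\cF_N$ with $\norm{\psi_0}_2=1$ and $\norm{\psi_0}_\infty\ge\sqrt{\rho/2}$.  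Let $y_0\in\La$ nearly attain this supremum; if $\abs{y_0-x_0}\le r_1/4$, the local bound itself yields an even stronger estimate, so assume $\abs{y_0-x_0}>r_1/4$.  By an elliptic-regularity bound on $\norm{\nabla\psi_0}_\infty$, there is a ball $\Theta=B(y_0,c\sqrt\rho)$ on which $\abs{\psi_0}\ge\tfrac14\sqrt{\rho/2}$, so $\norm{\psi_0\Chi_\Theta}_2^2\ge c_d\rho^{(d+2)/2}$.

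Apply Corollary~\ref{corQUCPD} to $\psi_0$ with potential $V-E$ (so that the source term has $\L^2$-norm at most $\eps$), center $x_0$, and target $\Theta$.  With $Q=Q(x_0,\Theta)\le L\sqrt d$ and using the local bound to estimate $\norm{\psi_{0,x_0,\delta}}_2^2\le C\delta^d(C^{2N^2}\delta^{2N+2}+\eps^2)$, we obtain
\begin{equation*}
\rho^{(d+2)/2}\lesssim\pa{L/\delta}^{cL^{4/3}}\bra{C^{2N^2}\delta^{2N+2+d}+\eps^2\delta^2}.
\end{equation*}

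The conclusion follows by optimizing $L$, $N$, $\delta$.  Balancing $C^{2N^2}\delta^{2N+d}\sim\eps^2$ gives $\log(1/\delta)\sim N\log C+\log(1/\eps)/N$, minimized at $N\sim\sqrt{\log(1/\eps)/\log C}$ with value $\sim\sqrt{\log(1/\eps)}$.  The UCP penalty $cL^{4/3}\log(L/\delta)\sim L^{4/3}\sqrt{\log(1/\eps)}$ then matches the gain $2\log(1/\eps)$ precisely at the critical scale $L\sim(\log(1/\eps))^{3/8}$ prescribed by the theorem.  The codimension constraint $\rho L^d\ge 2\gamma_d N^{d-1}$ at these parameter choices becomes $\rho\gtrsim(\log(1/\eps))^{(d-1)/2-3d/8}=(\log(1/\eps))^{-(4-d)/8}$, which is the threshold below which the method cannot even be set up; above this threshold the analysis in fact delivers a bound $\rho\lesssim\eps^c$ for some $c>0$, and combined with the trivial satisfaction of the claim below the threshold this yields $\rho\le C/(\log(1/\eps))^{(4-d)/8}$ in all cases.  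The main obstacle is this delicate simultaneous optimization.  The dimensional restriction $d\le 3$ and the exponent $\kappa_d=(4-d)/8$ both trace back to the $Q^{4/3}$ factor in the UCP, consistent with the formula $\kappa_d=(\beta-d(\beta-1))/(2\beta)$ at $\beta=4/3$ noted after Theorem~\ref{thmmain}.
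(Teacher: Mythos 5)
Your three ingredients are the right ones, and your replacement of the paper's trace-class argument (which finds a unit box $\Lambda_1$ with $\norm{\Chi_{\Lambda_1}\psi_0}_2\ge\gamma\rho$) by Lemma~\ref{lemL2Linfty} plus a gradient bound, yielding $\norm{\psi_0\Chi_\Theta}_2^2\gtrsim\rho^{(d+2)/2}$, is a legitimate variation, since only $\log\pa{1/\norm{\psi_0\Chi_\Theta}_2}$ enters the UCP exponent and any polynomial-in-$\rho$ lower bound suffices. The genuine gap is in how you set up the unique continuation step: you enforce the order-$N$ vanishing at a \emph{single} point $x_0$ and then apply Corollary~\ref{corQUCPD} with $Q=Q(x_0,\Theta)\le L\sqrt{d}$. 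With $Q\sim L$ the penalty factor is $(L/\delta)^{cL^{4/3}\log(\cdots)}$, and since the theorem must hold for \emph{all} $L\ge L_0(\log\frac{1}{\eps})^{3/8}$, this factor is unbounded in $L$ and the inequality becomes vacuous for large boxes. Even at the critical scale $L\sim(\log\frac{1}{\eps})^{3/8}$, with your $N\sim\sqrt{\log\frac{1}{\eps}}$ and $\log\frac{1}{\delta}\sim\sqrt{\log\frac{1}{\eps}}$, the penalty is $\exp(c'\log\frac{1}{\eps})=\eps^{-c'}$ against a gain of $\eps^2$, where $c'$ is an uncontrolled constant depending on $d,V_\infty,E_0$ through $m$ and $K^{2/3}$ in \eq{UCPboundD}; nothing forces $c'<2$. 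A second symptom of the same problem: vanishing only at $x_0$, you cannot prevent the near-maximizer $y_0$ from landing near $x_0$, and your fallback (``the local bound itself yields an even stronger estimate'') fails because $C^{N^2}\abs{x-x_0}^{N+1}$ is not small at fixed distance $r_1/4$ — it is only small for $\abs{x-x_0}\le\delta\sim C^{-N}R^{-1}$.

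The missing idea is to decouple the UCP scale from $L$ by enforcing the vanishing at an entire $R$-net $\cG$ of $\sim(L/R)^d$ points, at total codimension cost $(2L/R)^d\gamma_d N^{d-1}$, which is absorbed by taking $N\sim(\rho R^d/\gamma_d)^{1/(d-1)}$. Then whatever unit region carries mass $\gtrsim\rho$ has a net point $y_0$ at distance comparable to $R$, and the UCP is applied with $Q\sim R$. Choosing $\delta=(C^N R)^{-1}$ makes the penalty exponent $MR^{4/3}\log(R/\delta)\sim R^{8/3}$ and the vanishing term $(\delta/R)^{N}$; the comparison that must hold is $N>MR^{4/3}$, which is where $d\le 3$ genuinely enters (via $\frac{d}{d-1}>\frac43$) and which is guaranteed by choosing $R$ so that $\rho=c R^{(d-4)/3}$. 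The UCP inequality then forces $R^{8/3}\gtrsim\log\frac{1}{\eps}$, hence $\rho\lesssim(\log\frac{1}{\eps})^{-(4-d)/8}$ directly — no constant-matching against $\eps^2$ is needed, and the only requirement on $L$ is $L>4R$, i.e.\ $L\gtrsim(\log\frac{1}{\eps})^{3/8}$. Your final paragraph arrives at the correct exponent by reading it off the codimension constraint, but that numerology is not backed by a working chain of inequalities in your setup.
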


\begin{proof}

We fix $\eps \in ]0,\frac 1 2]$, let  $L\ge L_0(\eps)$, where $L_0(\eps) >0$ will be specified later, and take a box    $\La=\La_L $. Since $\sigma (H_\La)\subset [-V_\infty,\infty[$, it suffices to consider   $E_0 \ge -V_\infty -1$ and $E \in [ -V_\infty -1,E_0]$. We   set  $P=\Chi_{[E,E + \eps]}(H_\Lambda) $; note that 
     $\Ran P  \subset \cD (\Delta_{\Lambda})\subset {H}^2(\La)$ and 
\beq \label{epsineq}
\norm{\pa{H_\La -E}\psi} \le \eps \norm{\psi} \qtx{for all} \psi \in \Ran P.
\eeq
Moreover,  for $ \psi \in \Ran P$ we have
\begin{align}
 \norm{\psi}_\infty &=  \norm{\e^{- \pa{H_\Lambda+V_\infty}}\e^{ \pa{H_\Lambda+V_\infty}}\psi}_\infty\\ & \le  \norm{\e^{- \pa{H_\Lambda+V_\infty}}}_{\L^2(\La) \to \L^\infty(\La)}  \norm{\e^{ \pa{H_\Lambda+V_\infty}}\psi}
 \ \le C_{d}\e^{ E_0 + V_\infty +1}\norm{\psi},\notag
\end{align}
where we used that for $t>0$
\beq\label{heatkerneldom}
\norm{\e^{- t\pa{H_\Lambda+V_\infty}}}_{\L^2(\La) \to \L^\infty(\La)}\le \norm{\e^{t\Delta_\Lambda}}_{\L^2(\La) \to \L^\infty(\La)} \le  \norm{\e^{t\Delta}}_{\L^2(\R^d) \to \L^\infty(\R^d)}<\infty.
\eeq
Since $P \pa{H_\La -E}\psi= \pa{H_\La -E}P\psi= \pa{H_\La -E}\psi$ for $ \psi \in \Ran P$, we conclude that
\beq \label{epsineq3}
\norm{\pa{H_\La -E}\psi}_\infty  \le \eps\,  C_{d,V_\infty,E_0} \norm{\psi} \qtx{for all} \psi \in \Ran P.
\eeq

Let
\beq \label{assump23}
\rho:= \eta_\La \pa{[E,E + \eps]}= \tfrac 1 {L^d} \tr P.
\eeq
Recalling the estimate  $ \tr P\le  C_{d,V_\infty,E_0}L^d$  (e.g.,  \cite[Eq.~(A.7)]{GKduke}), where $ C_{d,V_\infty,E_0}\ge 1$, we obtain the uniform upper bound 
\beq\label{trest}
 \rho \le \rho_{\mathrm{ub}}:= C_{d,V_\infty,E_0}\qtx{with} \rho_{\mathrm{ub}}\ge 1 .
\eeq

We assume that
\beq \label{Ldgamma}
L^d >   {2^{3d+1}\gamma_d}\tfrac { \rho_{\mathrm{ub}}}  \rho , 
\eeq
since otherwise there is nothing to prove for $L$ sufficiently large.
Here $\gamma_d$ is the constant  in Theorem~\ref{lem0psi0}; we assume  $2^d \gamma_d \ge 1$ without loss of generality.
We take $R$ such that 
\beq\label{Rreq}
 {2^{d+1}\gamma_d}\tfrac { \rho_{\mathrm{ub}}}  \rho \le  R^d  < \pa{\tfrac L 4}^d;
\eeq
note that
\beq\label{choiceRL}
2 \le \rho   R^d  \qtx{and} 2 \le R^d.
 \eeq
In particular, it follows from  \eq{trest} and \eq{Rreq} that
 \beq\label{approxN}
 N: =  \left\lfloor \pa{\tfrac \rho {2^{d+1}\gamma_d} }^{\frac 1 {d-1}} R^{\frac d {d-1}} \right \rfloor \ge \left\lfloor  \rho_{\mathrm{ub}}^{\frac 1 {d-1}}\right \rfloor \ge 1 .
 \eeq

We  pick  $\cG\subset \La$  such that
 \beq
\overline{\La} = \bigcup_{y \in \cG} \overline{\La}_R(y) \qtx{and}  \# \cG= \pa{\lceil \tfrac L R\rceil} ^d\in \bra{\pa{\tfrac L R}^d,\pa{\tfrac {2L} R}^d}\cap \N.
 \eeq
   Given $y_1 \in \cG$, we apply Theorem~\ref{lem0psi0} with  $\Omega=\La\supset B\pa{y_1, 1}$, $W= V-E$, and $\cF= \Ran P$.  The hypothesis \eq{Feps0} follows from 
 \eq{epsineq3}. We conclude that there exists a vector subspace $\cF_{y_1,N}$ of  $\Ran P$ and $r_0=r_0(d,V_\infty,E_0) \in (0,1) $, such that, using also   \eq{approxN} and \eq{Rreq}, we have
 \begin{align}
\dim \cF_{y_1,N} \ge \rho L^d - \gamma_d N^{d-1} \ge 1  , \label{dimF2399}
\end{align}
and for all $\psi \in\cF_{y_1,N}$   we have 
  \beq\label{Bersestuse55}
 \abs{\psi(y_1+ x)}\le \pa{{C}_{d,{V}_\infty,E_0}^{N^2}  \abs{x}^{N+1} + \eps C_{d,V_\infty,E_0}}\norm{\psi}  \qtx{if} \abs{x} < r_0.
 \eeq
Picking $y_2 \in \cG$, $y_2 \not= y_1$, and applying Theorem~\ref{lem0psi0} with  $\Omega=\La\supset  B\pa{y_2, 1}$, $W= V-E$, and $\cF=\cF_{y_1,N}$, we obtain a vector  vector subspace $\cF_{y_1,y_2,N}$ of $\cF_{y_1,N}$, and hence of $\Ran P$, such that 
  \beq\label{dimF23998}
\dim \cF_{y_1,y_2,N} \ge \dim \cF_{y_1,N} - \gamma_d N^{d-1}  \ge \rho L^d - 2\gamma_d N^{d-1} \ge 1, 
\eeq
and \eq{Bersestuse55} holds for all $\psi \in\cF_{y_1,y_2, N}$ also with $y_2$ substituted for $y_1$.  Repeating this procedure
until we exhaust the sites in $\cG$,
 we conclude that there exists a vector subspace $\cF_R$ of  $\Ran P$ and $r_0=r_0(d,V_\infty,E_0) \in (0,1) $, such that 
  \beq\label{dimF23}
\dim \cF_R \ge \rho L^d - \pa{\tfrac {2L} R}^d\gamma_d N^{d-1}   \ge  \tfrac 1 2\rho L^d \ge  {2^{3d}\gamma_d} \rho_{\mathrm{ub}}  \ge 1 , 
\eeq
where we used the assumption \eq{Ldgamma},  and for all $\psi \in \cF_R$ and  $ y \in \cG$ we have 
  \beq\label{Bersestuse}
 \abs{\psi(y+ x)}\le \pa{{C}_{d,{V}_\infty,E_0}^{N^2}  \abs{x}^{N+1} + \eps C_{d,V_\infty,E_0}}\norm{\psi}  \qtx{if} \abs{x} < r_0.
 \eeq

We let $Q_R$ denote the orthogonal projection onto $\cF_R$.  Since $\tr Q_R= \dim \cF_R$, it follows from \eq{dimF23} that  that we can find a box $\Lambda_{1} =\Lambda_{1}(x_1) \subset\La$ such that
\beq
\tr  {\Chi_{\Lambda_{1}}Q_R \Chi_{\Lambda_{1}}} \ge\pa{ {\left\lceil  L \right\rceil}}^{-d} \tfrac 1 2  \rho L^d \ge 2^{-(d+1)} \rho.
\eeq
But $Q_R=Q_R P= P Q_R$ since $\cF_R \subset \Ran P$, and hence
\beq
 2^{-(d+1)} \rho\le  \tr \Chi_{\Lambda_{1}}Q_R \Chi_{\Lambda_1} =\tr\Chi_{\Lambda_{1}} P  Q_R \Chi_{\Lambda_{1}} 
\le \norm{\Chi_{\Lambda_{1}} P}_1 \norm{ Q_R \Chi_{\Lambda_{1}}}.
\eeq
Recall  that  
\beq
\norm{\Chi_{\Lambda_{1}} P}_1= \norm{P \Chi_{\Lambda_{1}}}_1\le C_{d,V_\infty,E_0}.
\eeq 
 (This is  \cite[Theorem~B.9.2]{SiBull} when $\La=\R^d$.   But by an argument similar to \eq{heatkerneldom}  the crucial estimate  \cite[Eq.~(B11)]{SiBull} holds on  finite boxes $\La$ with constants uniform in $\La$, so a careful reading of the proof of  \cite[Theorem~B.9.2]{SiBull} shows that the result holds  on  finite boxes $\La$ with constants uniform in $\La$.).  We thus conclude that
\beq
 \norm{ Q_R\Chi_{\Lambda_{1}}}  \ge C_{d,V_\infty,E_0}^{\prime} \rho \qtx{with} C_{d,V_\infty,E_0}^{\prime}>0 ,
\eeq
so there exists   $\psi_0 = Q_R \psi_0$ with $\norm{\psi_0}=1$ such that  
\beq \label{psi>23}
\norm{\Chi_{\Lambda_{1}} {\psi_0} } \ge \gamma\rho, \qtx{where} \gamma= \tfrac 1 2  C_{d,V_\infty,E_0}^{\prime}>0 .
\eeq
 (Note that $\gamma \rho <\norm{\psi_0}=1$.)

We   pick $y_0 \in \cG$ such that 
\beq
\tfrac 1 2 \le \tfrac 1 4 R \le \dist\pa{y_0,\Lambda_{1} } 
\le {\tfrac {5  \sqrt{d}} 2 R },
\eeq 
which  can done  by our construction, and apply Corollary~\ref{corQUCPD} with $x_0=y_0$,  $\Theta=\Lambda_{1}$, and potential $V-E$;   note that  (recall \eq{defRx0})
\beq
 \tfrac R 4 + \sqrt{d} \le {Q}={Q}(y_0,\Lambda_{1})\le  {\tfrac {5  \sqrt{d}} 2 R } + \sqrt{d}\le 3 \sqrt{d} R.
\eeq 
 Let  $0<\delta< \delta_0:=\min\set{\frac 1{24}, r_0}$, where $r_0$ is as in \eq{Bersestuse}. It follows from Corollary~\ref{corQUCPD}, using \eq{epsineq},   that
\begin{align} \label{UCPbound23}
 \pa{\tfrac \delta{3\sqrt{d}R}}^{m \pa{1 + K^{\frac 2 3}}\pa{ {R}^{\frac 43}  -  \log \norm{{\psi_0} \Chi_{\Lambda_{1}}}_2  }}\norm{\psi_0 \Chi_{\Lambda_{1}}}_2^2  \le   \norm{ {\psi_0}\Chi_{B(y_0,\delta)}}^2_2 +  \eps^2,
\end{align}
with a constant $m=m_d >0$ and $K= \norm{V-E}_\infty$.  Using \eq{Bersestuse} and \eq{psi>23}, we get
\begin{align} \label{UCPbound234}
 &\pa{\tfrac \delta{3\sqrt{d}R}}^{m \pa{1 + K^{\frac 2 3}}\pa{ {R}^{\frac 43}  -  \log (\gamma \rho )  }}(\gamma \rho)^2  \le  C_d  C_{d,V_{\infty},E_0}^{N^2}  \delta^{2(N+1) +d} +  C_{d,V_\infty,E_0}\,  \eps^2.
\end{align}

  Since  $\rho \ge 2 R^{-d}$ and $\tfrac \delta{3\sqrt{d}R} < \tfrac \delta{3\sqrt{d}} <1$  by \eq{choiceRL},  the inequality \eq{UCPbound234} implies the existence of strictly positive constants  $\widetilde{R}=\widetilde{R}_{d,V_{\infty},E_0}$ and  $M=M_{d,V_{\infty},E_0}  $ such that 
\begin{align} \label{UCPbound99}
\pa{\tfrac \delta{R}}^{M{R}^{\frac 43}   }   \le  C_{d,V_{\infty},E_0} ^{N^2} \delta^{2N } + C_{d,V_\infty,E_0}\,  \eps^2\qtx{for} R \ge \widetilde{R}.
\end{align}
We require 
\beq\label{widehat{R}}
R >  \widehat{R}=  \max\set{\widetilde{R}, \delta_0^{-1}},
\eeq
 and 
choose $\delta$ by (note $C_{d,V_{\infty},E_0} ^{N}\ge 1$)
\beq \label{deltachoice}    
\delta =  \pa{C_{d,V_{\infty},E_0}^N R}^{-1} < \delta_0, \qtx{so} \tfrac \delta{R}=  C_{d,V_{\infty},E_0} ^{N} \delta^2 =  \pa{C_{d,V_{\infty},E_0}^N R^2}^{-1},
\eeq
obtaining
\begin{align} \label{UCPbound991}
\pa{\tfrac \delta{R}}^{M{R}^{\frac 43}   }   \le  \pa{\tfrac \delta{R}}^{N } + C_{d,V_\infty,E_0}\,  \eps^2.
\end{align}

We now take $d=2,3$ and  take $R$ large enough so that 
\beq\label{deltaRN}
 \pa{\tfrac \delta{R}}^{N } \le \tfrac 1 2 \pa{\tfrac \delta{R}}^{M{R}^{\frac 43}}, \qtx{i.e.,}   \pa{C_{d,V_{\infty},E_0}^N R^2}^{N-M{R}^{\frac 43}} \ge 2.
\eeq
To see this,   note that  $\frac 43 < \frac d {d-1}$  for $d=2,3$, so
\beq\label{MRN}
M{R}^{\frac 43} < N=  \left\lfloor \pa{\tfrac \rho {2^{d+1}\gamma_d} }^{\frac 1 {d-1}} R^{\frac d {d-1}} \right \rfloor 
\qtx{if } \rho >  C^\pr_{d,V_{\infty},E_0} R^{\frac {d-4}3},
\eeq
and hence
\beq\label{MRN1}
\pa{C_{d,V_{\infty},E_0}^N R^2}^{N-M{R}^{\frac 43}}\ge 4^{N-M{R}^{\frac 43}}\ge 2 \qtx{if } \rho >  C^{\pr\pr}_{d,V_{\infty},E_0} R^{\frac {d-4}3}.
\eeq

We now choose $R$ by
\beq\label{chooserho}
\rho =c_{d,V_{\infty},E_0} R^{\frac {d-4}3},
\eeq
where the constant $c_{d,V_{\infty},E_0}$ is chosen large enough  to ensure that  all  the conditions \eq{Rreq},   \eq{widehat{R}}, and \eq{MRN1}  (and hence \eq{deltaRN}) are satisfied.  (This can be done using \eq{trest}.)  It then follows  
 from \eq{UCPbound991} and   \eq{deltaRN} that 
 \beq\label{CNR43}
\tfrac 1 2 \pa{\tfrac \delta{R}}^{M{R}^{\frac 43}}\le  C_{d,V_\infty,E_0}\,  \eps^2, \qtx{i.e.,} \pa{C_{d,V_{\infty},E_0}^{N} R^2}^{-M{R}^{\frac 43}} \le 2 C_{d,V_\infty,E_0}\,  \eps^2.
\eeq
Recalling 
  \eq{approxN}, and using   \eq{chooserho} with a sufficiently large constant $c_{d,V_{\infty},E_0}$,   we get from \eq{CNR43} that
\begin{align}
\e^{- M^\pr R^{ \frac 8 3}}= \e^{- M^\pr R^{\frac {d-4}{3(d-1)} + \frac d {d-1} + \frac 4 3}} \le  C_{d,V_\infty,E_0}\,  \eps^2,
\end{align}
where $M^\pr=M^\pr_{d,V_{\infty},E_0}$.  Thus
\beq
\log \tfrac 1 \eps \le C_{d,V_{\infty},E_0} {R}^{\frac 83} = \frac { C_{d,V_{\infty},E_0}^\pr}{ \rho^{\frac 8 {4-d}}},
\eeq
 and hence
\beq
\rho \le  { C_{d,V_{\infty},E_0} }{ \pa{\log \tfrac 1 \eps}^{-\frac {4-d}8}}  ,
\eeq
as long as $L$ is large enough to satisfy \eq{Rreq} with the choice of $R$ in \eq{chooserho}, namely $L \ge L_{d,V_\infty,E_0}  \pa{\log \tfrac 1 \eps}^{\frac 3 8}$.
\end{proof}



\begin{thebibliography}{CoHK}


\bibitem[ABR] {ABR} Axler, S., Bourdon, P.,  Ramey, W.:  \emph{Harmonic function theory}.
Second edition.
Graduate Texts in Mathematics \textbf{137}. Springer-Verlag, New York,  2001

\bibitem[B]{Be} Bers, L.: Local behavior of solutions of general linear elliptic equations.
 Comm. Pure Appl. Math.  \textbf{8}, 473-496 (1955) 

\bibitem[BoK]{BK} Bourgain, J., Kenig, C.: On localization in the continuous Anderson-Bernoulli model in higher dimension,  Invent. Math. \textbf{161}, 389-426 (2005)

\bibitem[CL]{CL}  Carmona, R,  Lacroix, J.: Spectral Theory of Random
Schr\"odinger Operators. Boston: Bir\-kha\"user, 1990

\bibitem[CoHK]{CHK} Combes,  J.M.,   Hislop, P.D.,  Klopp, F.:
Optimal Wegner estimate and its application to the global continuity of the integrated
density of states for random Schr\"odinger operators.  Duke Math. J.~\textbf{140},
469-498 (2007)

\bibitem[CrS1]{CS1} Craig, W.,  Simon, B.:  Subharmonicity of the Lyaponov index.
 Duke Math. J.  \textbf{50},  551-560  (1983) 
 
\bibitem[CrS2]{CS2} Craig, W., Simon, B.:   Log H\" older continuity of the integrated density of states for
 stochastic Jacobi matrices.
 Comm. Math. Phys.  \textbf{90}, 207-218  (1983)
 
 \bibitem[DS]{DS} Delyon, F.,  Souillard, B.: Remark on the continuity of the density of states of ergodic finite
 difference operators.
 Comm. Math. Phys.  \textbf{94},  289-291(1984)
 
\bibitem[DiJA]{DJA} Diestel, J., Jarchow, H.,  Tonge, A.:  \emph{Absolutely Summing Operators}.
Cambridge Studies in Advanced Mathematics \textbf{43}. Cambridge University Press, Cambridge,  1995
 
 
 \bibitem[DoIM]{DIM} {Doi, S.}, {Iwatsuka, A.}, {Mine, T.}: The uniqueness of the integrated density of states for the
 Schr\" odinger operators with magnetic fields.
 Math. Z. \textbf{237}, {335-371} (2001)
 
  \bibitem[EV]{EV} Escauriaza, L.,  Vessella, S.:  Optimal three cylinder inequalities for solutions to parabolic
 equations with Lipschitz leading coefficients.
 Inverse problems: theory and applications (Cortona/Pisa, 2002), 
 79-87, Contemp. Math. \textbf{333}, Amer. Math. Soc., Providence, RI,  2003


 
 \bibitem[GK1]{GKduke} Germinet, F.,  Klein, A.: {A characterization of the
Anderson metal-insulator transport transition}.
Duke Math. J. \textbf{124}, 309-351 (2004)


\bibitem[GK2]{GKloc} Germinet, F.,  Klein, A.: {A comprehensive proof of localization for continuous Anderson models with singular random potentials}.   J. Europ. Math. Soc. To appear



\bibitem[GiT]{GT} Gilbarg, D.. Trudinger, N.: \emph{Elliptic partial differential equations of second order}.
Reprint of the 1998 edition.
Classics in Mathematics. Springer-Verlag, Berlin,  2001

\bibitem[HW]{HW} Hartman, P., Wintner, A.: On the local behavior of solutions of non-parabolic partial
 differential equations.
 Amer. J. Math.  \textbf{75}, 449-476.  (1953) 
 
 
\bibitem[Ho]{How}  Howard, R.:  The Gronwall Inequality [Online].\\  Available: http://www. math.sc.edu/~howard/Notes/gronwall.pdf



\bibitem[KM]{KiM}  Kirsch, W.,   Martinelli, F.:  On the density of states of Schršdinger operators with a random
 potential.
 J. Phys.  A  \textbf{15}, 2139-2156  (1982) 

\bibitem[N]{Nak} Nakamura, S.:  A remark on the Dirichlet-Neumann decoupling and the integrated density of states.
 J. Funct. Anal.  \textbf{179},  136-152  (2001)


 \bibitem[PF]{PF}  Pastur, L.,  Figotin, A.: Spectra of Random and
Almost-Periodic Operators.  Heidelberg: Springer-Verlag, 1992

\bibitem[S]{S} Simmons, G. F.:   \emph{Introduction to topology and modern analysis}.
McGraw-Hill Book Co., Inc., New York-San Francisco, Calif.-Toronto-London,
   1963
 
 \bibitem[Si1]{SiBull}  Simon, B.: {Schr\"odinger semi-groups}.  Bull. Amer.
Math. Soc. {\bf 7},  447-526 (1982)

\bibitem[Si2]{Siprob}  Simon, B.: Schr\"odinger operators in the twenty-first century.
 Mathematical Physics 2000, 
 283-288, Imp. Coll. Press, London,  2000

\bibitem[T]{Th} Thomas, L.:  Time dependent approach to scattering from impurities in a
 crystal.
 Comm. Math. Phys.  \textbf{33}, 335-343  (1973)



\bibitem[ThE]{TE}Thurlow, C., Eastham, M.:  The existence of eigenvalues of infinite multiplicity for the  Schr\" odinger operator.
 Proc. Roy. Soc. Edinburgh Sect. \textbf{A  86}, 61-64.  (1980)
 
 
\end{thebibliography}
\end{document}